\def\eg{e.g., }
\def\ie{i.e., }
\newcommand{\vv}{\langle}
\newcommand{\ww}{\rangle}
\newcommand{\NN}{\mathbb{N}}
\newcommand{\axi}{\mathsf{ax}}
\newcommand{\eqdef}{\stackrel{\text{\tiny \textnormal{DEF}}}{=}}
\newcommand{\eqrec}{\stackrel{\text{\tiny \textnormal{REC}}}{=}}
\newcommand{\st}{ \ \ \big{|} \ \ }
\newcommand{\roo}{\vv \, \ww}
\newcommand{\cV}{\mathcal{V}}
\newcommand{\cM}{\mathcal{M}}
\newcommand{\cP}{\mathcal{P}}
\newcommand{\cT}{\mathcal{T}}
\newcommand{\cU}{\mathcal{U}}
\newcommand{\ccW}{\mathcal{W}}
\newcommand{\Dis}[1]{\big( #1 \big)}
\newcommand{\arpa}[2]{\textstyle\vee_{#2} \bG_{#1}}
\newcommand{\arn}[2]{\textstyle\wedge_{#2} \bG_{#1}}
\newcommand{\arpp}[2]{\textstyle\vee_{#2} \bF_{#1}}
\newcommand{\arnn}[2]{\textstyle\wedge_{#2} \bF_{#1}}
\newcommand{\arppp}[2]{\textstyle\vee_{#2} \bH_{#1}}
\newcommand{\arnnn}[2]{\textstyle\wedge_{#2} \bH_{#1}}
\newcommand{\TRUE}{\pmb{t}}
\newcommand{\FALSE}{\pmb{f}}
\newcommand{\bu}{\mathbf{u}}
\newcommand{\bw}{\mathbf{w}}
\newcommand{\bL}{\mathbf{L}}
\newcommand{\bF}{\mathbf{F}}
\newcommand{\bG}{\mathbf{G}}
\newcommand{\bH}{\mathbf{H}}
\newcommand{\bS}{\mathbf{S}}
\newcommand{\bT}{\mathbf{T}}
\newcommand{\bU}{\mathbf{U}}
\newcommand{\bW}{\mathbf{W}}
\newcommand{\bV}{\mathbf{V}}
\newcommand{\bE}{\mathbf{E}}
\newcommand{\bba}{\mathbf{a}}
\newcommand{\bbv}{\mathbf{v}}
\newcommand{\len}{\textnormal{lg}}
\newcommand{\dom}{\ensuremath{\textnormal{dom}}}
\newcommand{\seq}{\textnormal{SEQ}}
\newcommand{\conf}{\textnormal{CONF}}
\newcommand{\rules}{\textnormal{RULES}}
\newcommand{\TES}{\textnormal{TESTS}}
\newcommand{\envir}{\textnormal{ENV}}
\newcommand{\CT}{\textnormal{IT}}
\newcommand{\sR}{\mathsf{R}}
\newcommand{\sL}{\mathsf{L}}
\newcommand{\arr}{\longrightarrow}
\newtheorem{theorem}{Theorem}[section]   
\newtheorem{lemma}[theorem]{Lemma}
\newtheorem{proposition}[theorem]{Proposition}
\theoremstyle{definition}
\newtheorem{Definition}[theorem]{Definition}
\newtheorem{remark}[theorem]{Remark}
\newtheorem{Example}[theorem]{Example}
\newtheorem{Notation}[theorem]{Notation}
\newtheorem{notation}[theorem]{Notation and Terminology}{\bfseries}{}
\newcommand{\squishlist}{
 \begin{list}{$\bullet$}
  { \setlength{\itemsep}{0pt}
     \setlength{\parsep}{3pt}
     \setlength{\topsep}{3pt}
     \setlength{\partopsep}{0pt}
     \setlength{\leftmargin}{1em}
     \setlength{\labelwidth}{1.5em}
     \setlength{\labelsep}{0.5em} } }
\newcommand{\squishend}{
  \end{list}  }
\newcommand{\squishlistt}{
 \begin{list}{$\bullet$}
  { \setlength{\itemsep}{0pt}
     \setlength{\parsep}{3pt}
     \setlength{\topsep}{3pt}
     \setlength{\partopsep}{0pt}
     \setlength{\leftmargin}{1em}
     \setlength{\labelwidth}{1.5em}
     \setlength{\labelsep}{0.3em} } }
\newcommand{\squishendd}{
  \end{list}  }
\newcommand{\sqi}{
 \begin{list}{$\bullet$}
  { \setlength{\itemsep}{0pt}
     \setlength{\parsep}{3pt}
     \setlength{\topsep}{3pt}
     \setlength{\partopsep}{0pt}
     \setlength{\leftmargin}{1.4em}
     \setlength{\labelwidth}{1.5em}
     \setlength{\labelsep}{0.3em} } }
\newcommand{\sqe}{
  \end{list}  }
\title{Infinitary Classical Logic: \\  Recursive Equations and  Interactive Semantics}
\author{Michele Basaldella\thanks{Supported by the ANR project ANR-2010-BLAN-021301 LOGOI.}
\institute{Universit\'e d'Aix--Marseille, CNRS, I2M, Marseille, France}
\email{michele.basaldella@gmail.com}
}
\begin{document}
\maketitle

\begin{abstract}
In  this paper, we  present an interactive semantics for derivations  in an infinitary
extension of classical logic.
The formulas of our language are possibly infinitary
 trees labeled by propositional variables and logical
connectives.
We show  that in our setting
 every recursive formula equation
has a unique solution.
As for derivations, we use an infinitary variant of
 Tait--calculus to derive sequents.

 The interactive  semantics for derivations that we introduce  in this article  is  presented as a  debate (interaction tree)
between a  test $\cT$ (derivation candidate, Proponent) and
an environment   $\neg \bS$ (negation of a sequent, Opponent).
We   show a
completeness theorem for derivations that we call
interactive completeness
 theorem: the interaction between $\cT$ (test) and
 $\neg \bS$ (environment)
does not produce errors   (\ie Proponent wins)  just in case $\cT$ comes from a syntactical derivation
of  $\bS$.

\end{abstract}

\section{Introduction}

In this article, we present an \emph{interactive semantics}
for \emph{derivations} | \ie formal proofs | in a proof--system
that we call \emph{infinitary classical logic}. \\

\vspace{-0.35cm}

 \noindent \textbf{Infinitary classical logic.} \
The system we consider is an infinitary extension of
\emph{Tait--calculus} \cite{Tait68},  a  sequent calculus  for \emph{classical logic}
which is often used to analyze the proof theory of classical arithmetic and its fragments.
In Tait--calculus, formulas
are  built from positive and negated propositional variables by using
disjunctions $\vee$ and conjunctions $\wedge$
 of arbitrary (possibly infinite)
arity. Negation $\neg$ is  defined
by using a generalized form of  De Morgan's laws.
As for derivations,
sequents of formulas are derived by means of rules of
inference with a possibly infinite number of premises.
In Tait--calculus, formulas and derivations | when seen as trees |
 while not necessarily finitarily branching, are \emph{well--founded}.
In this work, we remove the assumption
of well--foundedness, and we let
formulas and derivations be
 infinitary in a broader sense. What  we get is the system that we
 call \emph{infinitary classical logic}. \\

 \vspace{-0.35cm}

 \noindent \textbf{Recursive equations.} \
 The main  reason for introducing infinitary classical logic
 is our interest in studying \emph{recursive}
 (\emph{formula}) \emph{equations} in a classical context.
 Roughly speaking, by recursive  equation we  mean  a
pair of formulas $(\bbv,\bF)$,  that we write as  $\bbv \eqrec \bF$, where
$\bbv$ is an atom and  $\bF$ is a formula which depends on $\bbv$, \ie such that $\bbv$ occurs in $\bF$. For instance,
$\bbv \eqrec \neg \bbv \vee \bbv$ is a recursive equation.
  A \emph{solution} of a recursive equation, say $\bbv \eqrec \neg \bbv \vee \bbv$,
 is any formula $\bG$ which is equal   to $\neg \bG \vee \bG$.
  Solutions of recursive equations are often called
 \emph{recursive types},
 and they have been studied extensively  in the literature
 (see \eg \cite{coppo,MelVou} and the references therein).
In this area,  one usually aims at finding a   mathematical space  in which the (interpretations of) equations
have a unique solution (or, at least, a canonical one).
In this paper,
we \emph{define} formulas as (possibly infinitary) labeled trees,
and we prove  existence and uniqueness of  solutions of equations
within the ``space" of the formulas.
As it turns out, if $\bG$ is the solution of a recursive equation (\ie a recursive type), then $\bG$
is not  well--founded. This
fact motivates us to consider  infinitary formulas in our broader sense. \\
\vspace{-0.35cm}

 \noindent \textbf{Derivations and tests.} \
 Since formulas are infinitary, we  let derivations  be infinitary as well.
We obtain a cut--free sequent
calculus in which the
solutions of all recursive equations
are derivable (in the sense of Theorem \ref{recder}(2)).  As expected, since we deal with
ill--founded (\ie non--well--founded) derivations,
 the price to pay for this huge amount of expressivity is that our calculus
is inconsistent (in the  sense of Remark \ref{remcons}(c)).
In spite of this,  it is precisely
this notion of infinitary  derivation that we want
to study in this work.  To this aim, we introduce a semantics \emph{for derivations},  as we now explain.

Traditionally, the proof theory of classical logic is centered around the notion of  \emph{derivability}.
In this paper, we are interested in analyzing
the structure of  our infinitary \emph{derivations}. To this aim, we introduce the notion of \emph{test}.  A {test} $\cT$ is a
 tree labeled by logical rules
(no sequents), and the fundamental relation between tests and derivations can be informally stated as follows: given a sequent $\bS$,
 if it is the case that by adding sequent information in an appropriate way to $\cT$  we obtain  a  derivation of  $\bS$, then
 we say that
  \emph{$\cT$ comes from a derivation
of   \/$\bS$}.
In this article,  the  \emph{syntactical} concept that we  investigate is ``$\cT$ comes from a derivation
of \/$\bS$"  rather than the traditional one ``$\bS$ is derivable."
Also, note that our concept is \emph{stronger} than the usual  one:
 if $\cT$ comes from a derivation of $\bS$, then $\bS$ is obviously derivable!

To grasp the idea behind our  concept from another  viewpoint, consider the lambda calculus. By the Curry--Howard correspondence, untyped  lambda terms can be seen as ``tests"
for  natural deduction derivations, and  ``$\cT$ comes from a derivation
of \/$\bS$"
 can be read  as ``the untyped lambda term $\cT$
has (simple) type  $\bS$ in the Curry--style type assignment."\\
\vspace{-0.35cm}

 \noindent \textbf{Interactive semantics and completeness.} \
Traditionally, in order to study   the concept of derivability,   one  introduces a notion of model and eventually shows a \emph{completeness
theorem}:  the \emph{syntactical} notion of \emph{derivability}
and the \emph{semantical} notion  of \emph{validity}
(as usual, \emph{valid} means ``true in every model")
coincide. Here, we are interested in
proving a completeness theorem as well.
But, since we   replace the syntactical concept of derivability
with ``$\cT$ comes from a derivation
of \/$\bS$", we   need to replace the semantical notion  of  validity with something else.
So, we now let our interactive semantics enter the stage.

In few words, our interactive semantics is organized as follows: first, we introduce the notion
of \emph{environment} $\neg \bS$ (the  negation  of a  sequent $\bS$) and then,
we make the test $\cT$ and the environment $\neg \bS$ \emph{interact}.
More precisely, we  introduce the notion of \emph{configuration} (a pair of the form $(\cT,\neg \bS)$) and define a procedure which makes configurations evolve from the initial configuration $(\cT,\neg \bS)$ by means of
a  transition relation. As a result, we get a tree of configurations
that we call an \emph{interaction tree}.
The procedure which determines
the interaction tree is \emph{interactive} in the sense
that it can be seen as a debate
between two players: Proponent  (the test $\cT$)
and   Opponent (the environment $\neg \bS$).
Opponent  \emph{wins the debate} if  the interaction between $\cT$ and $\neg \bS$ produces
 an
\emph{error}, \ie a position
in the interaction tree generated  from  $(\cT,\neg \bS)$ which is labeled by an  error symbol. Otherwise, Proponent \emph{wins the debate}. Our main result is the \emph{interactive completeness theorem}:
\vspace{-0.1cm}
\squishlist
\item[] {\centering
$\cT$ comes from a derivation  of $\bS$ \ \ \ \ \ \ if and only if \ \ \ \ \ \ $\left.
  \begin{array}{l}
  \hbox{the interaction between $\cT$ and $\neg \bS$} \\
  \hbox{does not
produce errors.}   \end{array}
\right.$ \hfill ($\star$) \par}
\squishend

\noindent That is, $\cT$ comes from a derivation  of $\bS$ if and only if Proponent wins the debate.

The motivation for introducing our semantics comes
from our interest   in extending
the completeness theorem of ludics \cite{locus} to logics which
are not necessarily polarized fragments of linear logic.
 The completeness theorem of \cite{BT}
(called interactive completeness also there) can be stated, up to terminology and notation, in our setting as
\vspace{-0.1cm}
\squishlist
\item[] {\centering
$\cT$ comes from a derivation  of $\bS$ \ \ \ \ \ \ if and only if \ \ \ \ \ \ $\left.
  \begin{array}{l}
  \hbox{for  no $\cM \in \neg \bS$, the interaction between} \\
  \hbox{$\cT$ and $\cM$
produces errors.}   \end{array}
\right.$ \hfill ($\star\star$) \par}
\squishend
The crucial point is, of course,
the RHS of ($\star\star$).
Here, $\bS$ represents a formula of a polarized fragment of linear logic, and $\cT$ and $\cM$ are designs
(proof--like objects similar to our tests).
In \cite{BT}, formulas
are interpreted as sets of designs such that  $\neg \neg \bS = \bS$, where $ \neg \bS$ is the set
of designs given by:  $\cM \in \neg \bS$
just in case for every $\cP \in \bS$, the interaction between $\cP$  and $\cM$
does not produce errors (so, the RHS of ($\star\star$) means $\cT \in \neg \neg \bS = \bS$).
In \cite{BT},  the result of the interaction between $\cT$ and $\cM$   is determined by a procedure of reduction for designs which reflects the procedure of cut--elimination of the underlying logic.
Of course, it would be very nice if we could
use the same approach  in our setting!

Unfortunately, if one tries
to remove the polarities from  ludics, then one encounters
several \emph{technical} problems
related to cut--elimination (that we do not discuss here).
To the present author, the
most convenient way to cope with non--polarized logics, is to build
a new framework from the very beginning, keeping the \emph{format} of the statement ($\star\star$) as guiding principle, the rest being | in case |  sacrificed.
The  choice here is purely personal:
the present author believes that
the significance of ludics is ultimately justified by the  completeness theorem not \eg
by the fact that   the interpretation of the logical formulas is induced by a procedure of reduction for designs.
This describes the origin of this work.

Finally, we note that
the RHS of ($\star$)
is just the  RHS of ($\star\star$) in case $\neg \bS$ is a singleton
(and so, it can be identified with
$\neg \bS$ itself). Indeed, this is the case for the interactive semantics presented in this
paper.

As for future work, we plan to adapt our interactive  semantics to analyze derivations in
second--order propositional classical logic.\\

\vspace{-0.35cm}

 \noindent \textbf{Outline.} \  This paper is organized as follows.
In Section \ref{sec2} we recall some preliminary
notions about labeled trees.
In Section \ref{sec3} we introduce
formulas and
 recursive formula equations, 	and
we prove the existence and the uniqueness of solutions of equations.
We also define the  notion of derivation and discuss
the derivability of some sequents.
In Section \ref{sec4} we present our interactive semantics
and prove the   interactive  completeness theorem.

\section{Preliminaries: Positions and Labeled Trees} \label{sec2}

In this section, we recall  the basic  notions of
position
and labeled tree.
We also establish some  notation and terminology that
we  extensively  use in the sequel.

\begin{Definition}[Position, length]
Let $\NN$ be the set of the natural numbers.
Let $\infty$ be any object such that
$\infty \notin \NN$. We call \textbf{position}
any function
$p : \NN \arr \NN \, \cup \, \{\infty\}$
which satisfies the following property:
\squishlist
\item[$\phantom{ab}$ (P)] \quad for some    $n \in \NN$, \
$p(k) \in \NN$   for  all  $ k < n$
\ and \ $p(k) = \infty$  for  all $ k \geq n$.
\squishend
Since $\infty \notin \NN$,  the natural number $n$  above
is unique. We call it the \textbf{length} of the position and  denote it by $\len(p)$.
The set of all  positions
is denoted by  $\NN^\star$ and we
use $p,q,r\ldots$ to range over its elements.
\hfill $\triangle$
\end{Definition}

\begin{notation} Let $p$, $ q$ and $r$ be positions,
and let $U$, $V$ and $W$ be sets of positions.

(1) A position  $p$ is also written  as $\vv p(0),\ldots,p(\len(p)-1)\ww$.
In particular, we write $\roo$ for the unique position of length $0$ that we call the \emph{empty position}.

(2)
The \emph{concatenation}
of  $p$ and $q$
is the position $p \star q$ defined as follows:

\squishlist
\item[] {\centering
$p \star q \, (k) \ \eqdef \ $ $\left\{
  \begin{array}{ll}
    p(k) \enspace&  \hbox{if $ k < \len(p)$} \\
    q(k - \len(p)) \enspace &  \hbox{if $  k \geq \len(p)$\enspace, \qquad for $k \in \NN$\enspace.}   \end{array}
\right.$ \par}
\squishend

\noindent In other words,
$p \star q \ = \vv p(0),\ldots,p(\len(p)-1), q(0),\ldots,q(\len(q)-1)\ww$.
The operation of concatenation
is associative (\ie $(p \star q) \star r = p \star (q \star r) $)
and it has $\roo$ as neutral element (\ie $\roo \star p = p = p \star \roo$). Note also that $\len(p \star q) = \len(p) + \len(q)$.

(3) We write $p \sqsubseteq q$     if
$q = p \star t$  for some $t \in \NN^\star$, and we
 say that $q$ is an \emph{extension of} $p$
and that $p$ is a \emph{restriction of} $q$.
If $ \len(t) \geq 1$ (resp. $ \len(t) = 1$), then we also say that $q$ is a \emph{proper}
(resp. \emph{immediate}) extension of $p$
and that $p$ is a  \emph{proper} (resp. \emph{immediate})  restriction of $q$, and we write $p \sqsubset q$ (resp. $p \sqsubset_1 q$).


(4) We write $U\star V$
for the set
$\{ p \star q  \st  p \in U $  and $q \in V\}$.
Note that $U \star (V \star W) = (U \star  V) \star W $,
and that $ U \star \bigcup_{ i \in I} V_i = \bigcup_{i \in I} (U \star V_i)$, for every family $\{V_i\}_{i \in I}$ of sets
of positions.
 \hfill $\triangle$
\end{notation}

\begin{Definition}[Labeled tree, domain] \label{tree}  Let $L$ be a set.
Let $\infty_L$ be any object such that
$\infty_L \notin L$.
A \textbf{tree}  \textbf{labeled by} $L$ is a function
$T : \NN^\star \arr L \, \cup \, \{ \infty_L \}$
which satisfies the following properties:
\squishlist
\item[$\phantom{ab}$ (T$_1$)] $T(\roo) \in L$;
\item[$\phantom{ab}$ (T$_2$)]  if $T(p) \in L$ and $q \sqsubseteq p$,  then $T(q) \in L$.
\squishend
We call  the set $\{ p \in \NN^\star \st T(p) \in L \}$ the \textbf{domain}
of $T$ and we denote it as $\dom(T)$. \hfill $\triangle$
\end{Definition}

Let $T$ and $U$ be  trees labeled by $L$.
Since  $T(q) = \infty_L$ for all
$q \notin \dom(T)$,  the tree $T$ is completely determined
by  the values it takes on its domain.
In particular,
\squishlist
\item[] \centering{$T = U$  \ \ \ \ if and only  if  \ \ \ \
$\dom(T) = \dom(U)$ \,   and \,  $T(p) = U(p)$ for all $p \in \dom(T)$\enspace. \par}
\squishend

\begin{notation} Let $T$ and $U$ be  trees labeled by $L$, and let  $p \in \dom(T)$.

(1)
 We say that
$p$ is a \emph{leaf of $T$} if  there is no $q \in \dom(T)$ such that $p \sqsubset q$.

(2)
The \emph{subtree of $T$ above $p$} is the tree $T_p$ labeled
by $L$ defined as follows:

\squishlist
\item[] {\centering \hspace{-0.8cm}
$\dom(T_p) \ \eqdef \  \{ q \in \NN^\star \st p \star q \in \dom(T) \}$ \ \ \ \ \ \  and \ \ \ \ \ \
  $T_p(q) \ \eqdef \  T(p \star q)$\enspace, \  for
 $q \in \dom(T_p)$\enspace.  \par}
\squishend
Note that we have $T_{\roo} = T$  and $(T_p)_q = T_{p \star q}$, for every  $p$ and $q$ in $\NN^\star$  such that $p \star q \in \dom(T)$.

We  say that $U$ is a \emph{subtree of} $T$
if $U = T_q$, for some $q \in \dom(T)$.
If  $\len(q) = 1$, then we also say that $U$ is
 an \emph{immediate subtree of $T$}.

(3)
 If $L$ is a product, \ie $L = A \times B$ for some sets $A$ and $B$, then
 we write $T_{\sL}(p)$ and $T_{\sR}(p)$
 for the  left and the right component
 of $T(p)$ respectively (\ie if
$T(p) = (a,b)$,  then
$T_{\sL}(p) = a $ and $T_{\sR}(p) = b$).

(4) We say that $T$ is  \emph{ill--founded}
if there exists a function $f : \NN \arr \dom(T)$
such that
$f(n) \sqsubset f(n+1)$ for every $n \in \NN$. We also say that $T$ is \emph{well--founded} if it is not ill--founded.  \hfill $\triangle$
\end{notation}

\section{Infinitary Classical Logic} \label{sec3}

In this section, we present
 our infinitary version of classical logic.
In Subsection \ref{forsec}
we introduce formulas
as possibly infinitary labeled trees,
and in Subsection \ref{receqsec}
we introduce the notion of recursive formula equation and prove that
solutions of equations exist and they are unique.
Finally, in Subsection \ref{subder} we introduce the concept of derivation
and observe some basic properties.


\subsection{Formulas} \label{forsec}
We  now define the concept of   formula
and the operation of  negation.

 \begin{Definition}[Propositional variable]
 Let $+$ and $-$ be two distinct symbols. Let $\cV \eqdef \{+\} \times \NN$ and $\neg \cV \eqdef  \{-\} \times \NN$.
  We call  the elements of $\cV$ (resp. $\neg \cV$)  \textbf{positive} (resp. \textbf{negated}) \textbf{propositional
 variables}, and
 we denote a generic element $(+,v)$ of $\cV$ (resp. $(-,v)$ of  $\neg \cV$) by $\bbv$  (resp. $\neg \bbv$).
   \hfill$\triangle$\end{Definition}

\begin{Definition}[Formula] \label{formula}
 Let $\vee$ and $\wedge$ be two distinct symbols
 such that $ \{ \vee ,\wedge \} \, \cap \, \big(\cV \cup \neg \cV \big) = \emptyset$.
We call  \textbf{formula}  any tree  $T$ labeled by $\{ \vee ,\wedge \} \, \cup \,  \cV \cup \neg \cV$ which satisfies the following property:
 \squishlist
\item[$\phantom{ab}$ (F)]  \qquad   for every $p \in \dom(T)$, if   $T(p) \in \cV \cup \neg \cV$,  then  $p$ is a leaf of $T$.
\squishend
In the sequel, we use the letters $\bF, \bG ,\bH ,\ldots$ to range over formulas.
\hfill$\triangle$
 \end{Definition}

\begin{notation}
Let $\bF$ be a formula.

(1) Since
for every $p \in \dom(\bF)$, the  subtree
$\bF_p$ is a formula, we say that
$\bF_p$ is a \emph{subformula of} $\bF$ and that $\bF_p$ \emph{occurs in} $\bF$. If
$\len(p) =1$ then we also say that
$\bF_p$ is an \emph{immediate subformula of} $\bF$.

(2) If  $\bF(\roo)  \in \cV \cup \neg \cV$,  then $\dom(\bF) = \{\roo\}$
and we say that $\bF$ is an \textbf{atom}.
If  $\bF(\roo) = \bbv \in \cV$ (resp. $\bF(\roo) = \neg \bbv \in \neg \cV$),  then we abusively   denote $\bF$ by $\bbv$ (resp. $\neg \bbv$) and we  say that $\bF$ is  a \textbf{positive} (resp. \textbf{negated}) \textbf{atom}.

(3) If ${\bF}(\roo)  \in \{ \vee, \wedge\}$, then we call $\bF$  a \textbf{compound formula}. The set of natural numbers
\squishlist
\item[] \centering{$I \ \eqdef \ \{i \in \NN \st \vv i \ww \in \dom(\bF) \}$ \par}
\squishend
\noindent is said to be the \textbf{arity of}  $\bF$.
Let $\bF(\roo) = \vee$ (resp. $\bF(\roo) = \wedge$). Since for every $i \in I$ the formula $\bF_{\vv i \ww}$ is
an immediate subformula of $\bF$, we also denote $\bF$ by
  $ \vee_I \bF_{\vv i \ww} $
(resp.  $\arnn{\vv i \ww}{I}$)
and we say that  $\bF$ is a
\textbf{disjunctive} (resp. \textbf{conjunctive}) \textbf{formula}.
If $I = \{0,\ldots,n-1\}$ for some $n \in \NN$,
 then
we also write $\vee [ \bF_{\vv 0 \ww},\ldots, \bF_{\vv n-1 \ww} ]$
(resp. $\wedge [ \bF_{\vv 0 \ww},\ldots, \bF_{\vv n-1 \ww} ]$)
for $\bF$.
Obviously, if we write  something like ``let $\bF$ be  $
\vee [ \bG_0,\ldots, \bG_{n-1} ]$ \ldots " (resp.
$\wedge [ \bG_0,\ldots, \bG_{n-1} ]$), then
we mean that $\bF$ is a disjunctive (resp. conjunctive)
 formula of arity $\{0,\ldots, n-1\}$ and that $\bG_k = \bF_{\vv k \ww}$, for each $k < n$.
Similarly, we may use the expression $\vee[\bG,\bH]$ (resp. $\wedge[\bG,\bH]$) to denote the disjunctive (resp. conjunctive) formula
$\bF$ whose arity is $\{0,1\}$ and such that $\bF_{\vv 0 \ww} = \bG$ and
 $\bF_{\vv 1 \ww} = \bH$, and so on.
 Finally, if
$I  = \emptyset$ | that is, $\dom(\bF) = \{\roo\}$) |
then we write $\FALSE$ (for \emph{false}) and  $\TRUE$ (for \emph{true})
rather than $\vee [ \ ]$ (resp. $\wedge [ \ ]$) respectively.
 \hfill$\triangle$

\end{notation}

 \begin{Definition}[Negation]
 Let $\bF$ be a formula.
 The formula $\neg\bF$, that we call
 the \textbf{negation of} $\bF$, is defined
 as follows:
$\dom(\neg\bF) \eqdef  \dom(\bF)$ and

\squishlist
\item[]
{\centering
${\neg\bF}(p) \ \eqdef \ $ $\left\{
  \begin{array}{rl}
    \neg \bbv\enspace &  \hbox{ if ${\bF}(p) = \bbv$} \\
    \bbv\enspace &  \hbox{ if ${\bF}(p) = \neg \bbv$} \\
    \vee\enspace &  \hbox{ if ${\bF}(p) = \wedge$} \\
    \wedge\enspace &  \hbox{ if ${\bF}(p) = \vee$\enspace, \qquad for $p \in \dom(\neg\bF)$\enspace. }
  \end{array}
\right.$

\vspace{-0.55cm}
\hfill$\triangle$ \par}
\squishend

 \end{Definition}
 We observe that the negation is involutive, \ie
$\neg \neg \bF = \bF$, for every formula $\bF$. Furthermore,
\squishlist
\item[] \centering{$\neg\arpp{\vv i \ww}{I} \ = \  \wedge_I \neg\bF_{\vv i \ww}$ \ \ \ \ and \ \ \ \
$\neg\arnn{\vv i \ww}{I} \ = \  \vee_I \neg\bF_{\vv i \ww}$\enspace.  \par}
\squishend

 According to Definition \ref{formula},
 formulas are allowed to  be  \emph{infinitary}:
they  may have an infinite set $I \subseteq \NN$ as  arity, and they  can also be ill--founded. In
Tait's work \cite{Tait68},  the situation is rather different:
 the arity of a compound  formula
 need not  be a subset of $\NN$, and  only
well--founded formulas are  considered.

Let us  discuss our choices.
As for the first difference,
we restrict  attention to sets of natural numbers mainly for expository reasons; a pleasant  consequence of
this choice is
 that we can define  sequents as \emph{formulas}  (Definition \ref{seq})
rather than finite sets (as  in \cite{Tait68}),
multi--sets, or sequences of  formulas.
As for the second one, we have to  consider ill--founded
formulas because   solutions of   recursive equations
are formulas which \emph{always}   have this property (see Theorem \ref{recthem}).

We finally observe that
the principles of (transfinite)
induction and recursion cannot be applied to
ill--founded
formulas.  In particular, | even though we are in classical world | it is not possible to give to our formulas
a Tarskian definition of \emph{truth}. Nevertheless, in our setting it is possible to
define a reasonable notion of \emph{derivation}, as we do  in
 Subsection  \ref{subder}.

 \subsection{Recursive Formula Equations} \label{receqsec}
In this subsection, we  define the concepts of recursive equation and solution of an equation. We
 prove  that every recursive equation
has a unique solution. We also give some concrete examples.

\begin{Definition}[Substitution] \label{subdef}
Let $\bF$ and $\bG$ be  formulas, and let $\bbv$ be a
\emph{positive} atom.
We define the formula $\bF[\bG/\bbv]$
obtained by the \textbf{substitution of $\bG$ for $\bbv$ and
of $\neg\bG$ for $\neg \bbv$ in $\bF$} as follows.
Let $ R \eqdef \{ r \in \dom(\bF) \st {\bF}(r) \in \{\bbv, \neg \bbv\}  \}$
and $S \eqdef \dom(\bF) \setminus R$.
We set $\dom(\bF[\bG/\bbv]) \ \eqdef \ S \, \cup \, \big(R \star  \dom(\bG) \big)$ and

\squishlist
\item[]
{\centering
$\bF[\bG/\bbv](p) \ \eqdef \ $ $\left\{
  \begin{array}{rl}
  {\bF}(p)\enspace &  \hbox{if  $p \in S$} \\
     {\bG}(q)\enspace &  \hbox{if  $p = r \star q$ and   $\bF(r) = \bbv$} \\
  \neg{\bG}(q)\enspace &  \hbox{if  $p = r \star q$ and   $\bF(r) = \neg \bbv$\enspace, \ \  for  $p \in \dom(\bF[\bG/\bbv])$\enspace. }   \end{array}
\right.$

\vspace{-0.53cm}
\hfill$\triangle$ \par}
\squishend
\end{Definition}
The correctness of the previous definition is justified by the following lemma.

\begin{lemma} \label{lemsub}
Let  \/$R$\@ and \/$S$\@ be as in \emph{Definition \ref{subdef}}.
\sqi
\item[\emph{(a)}]
For every \/$r  \in R$\@ and every  \/$ t \in \NN^\star$\@, if \/$p= r \star t$\@  then  \/$p \notin S$\@.
\item[\emph{(b)}]  Let  \/$V \subseteq \NN^\star$\@, and let   \/$p \in R \star V$\@.
Suppose that  there are  \/$r$\@,  \/$r'$ in \/$R$\@ and    \/$q$\@,   \/$q'$ in  \/$V$\@ such that  \/$ p =   r \star q$\@ and \/$p = r' \star q'$\@.
Then, \/$r = r'$\@ and \/$q = q'$\@.

\sqe
\end{lemma}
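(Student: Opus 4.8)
The plan is to prove each of the two claims directly from the definitions of $R$ and $S$, exploiting the fact that every $r \in R$ labels an atom in $\bF$, hence is a leaf of $\bF$.

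For part (a), the key observation is that if $r \in R$ then $\bF(r) \in \{\bbv, \neg\bbv\} \subseteq \cV \cup \neg\cV$, so by property (F) of formulas (Definition \ref{formula}), the position $r$ is a leaf of $\bF$. Now suppose $p = r \star t$ for some $t \in \NN^\star$. If $p \in S \subseteq \dom(\bF)$, then $p \in \dom(\bF)$, and since $r \sqsubseteq p = r \star t$, the position $r$ has an extension $p$ in $\dom(\bF)$. The only way this is compatible with $r$ being a leaf is $p = r$, i.e. $t = \roo$; but then $p = r \in R$, contradicting $p \in S$ because $R$ and $S$ are disjoint (as $S = \dom(\bF) \setminus R$). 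So I would first dispose of the case $t = \roo$ (giving $p = r \in R$, not in $S$) and then argue that $t \neq \roo$ forces $p = r \star t$ to be a proper extension of the leaf $r$, which cannot lie in $\dom(\bF) \supseteq S$. Hence $p \notin S$ either way.

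For part (b), I want to show that the decomposition $p = r \star q$ with $r \in R$, $q \in V$ is unique. Suppose $p = r \star q = r' \star q'$ with $r, r' \in R$ and $q, q' \in V$. Without loss of generality assume $\len(r) \leq \len(r')$; then comparing the two concatenations position-by-position shows $r \sqsubseteq r'$, i.e. $r' = r \star s$ for some $s \in \NN^\star$. Since both $r$ and $r'$ lie in $R \subseteq \dom(\bF)$ and $r$ is a leaf of $\bF$ (again by (F)), the extension $r' = r \star s$ of the leaf $r$ inside $\dom(\bF)$ forces $s = \roo$, whence $r = r'$. Once $r = r'$, cancelling the common prefix in $r \star q = r \star q'$ (using associativity of $\star$ and that $\len(p\star q)=\len(p)+\len(q)$, so the tails agree in each coordinate) yields $q = q'$.

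The main obstacle, such as it is, is purely notational: I must argue carefully that extending a leaf position inside $\dom(\bF)$ is impossible, and that concatenation can be cancelled on the left. Both follow mechanically from the definition of leaf (there is no $q \in \dom(\bF)$ with $r \sqsubset q$) together with the associativity and length properties of $\star$ recorded in the Notation after the Position definition. There is no genuine difficulty; the whole lemma is a bookkeeping consequence of property (F), which guarantees that the positions in $R$ are leaves and therefore have no proper extensions within the domain.
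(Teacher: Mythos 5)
Your proposal is correct and follows essentially the same route as the paper: both parts rest on the observation that positions in $R$ are leaves of $\bF$ by property (F), with part (a) split into the cases $t = \roo$ and $t \neq \roo$, and part (b) using comparability of the two prefixes of $p$ plus the leaf property to force $r = r'$ and then $q = q'$. Your version merely spells out details the paper leaves implicit (the disjointness of $R$ and $S$, and the left-cancellation step).
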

\begin{proof}
(a)
If $t = \roo$, then $p = r \star \roo  = r \in R$.
If $ t \neq \roo$, then $ p \notin \dom(\bF)$,  as $r$ is a leaf of $\bF$. Hence, $ p \notin S$.

(b)   Since   both $r$ and $r'$
are restrictions of $p$,  we have
$r \sqsubseteq r'$ or $r' \sqsubseteq r$.
 Since  $r$ and $r'$ are leaves of $\bF$,
we conclude  $r = r'$ (and  hence $q = q'$).  \end{proof}
The following proposition easily follows from our definition of substitution.

\begin{proposition} \label{subst} Let  \/$\bF$\@ and \/$\bG$\@ be  formulas, and
let \/$\bbv$\@  be a positive atom. Then:
\squishlist
\item[$\phantom{ab}$ \emph{(1)}]
$\bbv[\bG/\bbv] = \bG$\@ and  \/$(\neg\bbv)[\bG/\bbv] = \neg\bG$;
\item[$\phantom{ab}$ \emph{(2)}] $\big(\arpp{\vv i\ww}{I}\big)[\bG/\bbv] = \vee_{I} \big(\bF_{\vv i \ww}[\bG/\bbv]\big)$\@ and
\/$\big(\arnn{\vv i\ww}{I}\big)[\bG/\bbv] = \wedge_{I} \big(\bF_{\vv i \ww}[\bG/\bbv]\big)$\@;
\item[$\phantom{ab}$ \emph{(3)}] $\bF[\bG/\bbv] = \bF$\@,  if neither \/$\bbv$\@ nor \/$\neg\bbv$\@ is a subformula of  \/$\bF$\@;
\item[$\phantom{ab}$ \emph{(4)}]
$\neg(\bF[\bG/\bbv]) = (\neg\bF)[\bG/\bbv]$\@. \hfill $\square$
\squishend
\end{proposition}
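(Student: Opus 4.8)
The plan is to prove all four parts directly from the definition of substitution (Definition \ref{subdef}), using Lemma \ref{lemsub} to guarantee that the case analysis in that definition is unambiguous. Throughout, I fix formulas $\bF$, $\bG$ and a positive atom $\bbv$, and I write $R$ and $S$ for the sets associated to $\bF$ as in the definition.

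For part (1): if $\bF = \bbv$, then $\dom(\bF) = \{\roo\}$, so $R = \{\roo\}$ and $S = \emptyset$; hence $\dom(\bbv[\bG/\bbv]) = \{\roo\} \star \dom(\bG) = \dom(\bG)$, and for $p = \roo \star q = q$ the second clause gives $\bbv[\bG/\bbv](q) = \bG(q)$, so $\bbv[\bG/\bbv] = \bG$. The case $\neg\bbv$ is identical using the third clause, yielding $\neg\bG$. For part (2): when $\bF(\roo) = \vee$ with arity $I$, neither $\roo$ nor any position of length $\geq 1$ lying below an atom changes the root symbol, so $\roo \in S$ and the substitution acts independently on each immediate subtree $\bF_{\vv i\ww}$; the key bookkeeping is that $\dom(\bF[\bG/\bbv])$ decomposes, via the associativity of $\star$, into $\{\roo\}$ together with the disjoint union over $i \in I$ of $\vv i\ww \star \dom(\bF_{\vv i\ww}[\bG/\bbv])$, and that the label at $\roo$ remains $\vee$. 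This is exactly the condition for $\bF[\bG/\bbv]$ to equal $\vee_I(\bF_{\vv i\ww}[\bG/\bbv])$; the conjunctive case is symmetric. For part (3): if neither $\bbv$ nor $\neg\bbv$ occurs in $\bF$, then $R = \emptyset$, so $S = \dom(\bF)$ and $R \star \dom(\bG) = \emptyset$, whence $\dom(\bF[\bG/\bbv]) = \dom(\bF)$ and the first clause gives $\bF[\bG/\bbv](p) = \bF(p)$ for every $p$, i.e. $\bF[\bG/\bbv] = \bF$. For part (4): since negation leaves the domain fixed and $\bF$, $\neg\bF$ share the same set $R$ of atom-positions (a position carries $\bbv$ or $\neg\bbv$ in $\bF$ exactly when it carries $\neg\bbv$ or $\bbv$ in $\neg\bF$), the two formulas $\neg(\bF[\bG/\bbv])$ and $(\neg\bF)[\bG/\bbv]$ have the same domain; I then check equality of labels pointwise, matching the clauses of negation against the three clauses of substitution and using $\neg\neg\bG = \bG$ on the atom-positions.

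Strictly speaking, I would establish part (1) first, then prove (2) and (3) as they are purely computational, and finally handle (4), where the cleanest route is to prove the stronger pointwise-equality statement that the clause selected at each $p$ on the left agrees with the clause selected on the right. I expect the main obstacle to lie in part (4), where I must confirm that the substitution and negation operations select ``the same branch'' of their respective case distinctions at every position --- concretely, that a position of the form $r \star q$ with $\bF(r) = \bbv$ (so $(\neg\bF)(r) = \neg\bbv$) produces $\neg\bG(q)$ under the left-hand side (negate the $\bG(q)$ arising in $\bF[\bG/\bbv]$) and also $\neg\bG(q)$ under the right-hand side (third substitution clause applied to $\neg\bF$), with the analogous matching for $\bF(r) = \neg\bbv$. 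Here Lemma \ref{lemsub}(b) is what licenses speaking of \emph{the} branch at all, since it guarantees each $p \in R \star \dom(\bG)$ has a unique decomposition $r \star q$, so the clause chosen is well-defined; without it the pointwise comparison would be ambiguous. The remaining verifications are routine unfoldings of the definitions.
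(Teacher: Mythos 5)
Your proposal is correct and takes essentially the same route as the paper: the paper offers no written argument at all (it simply states that the proposition ``easily follows from our definition of substitution''), and what you have written is precisely that routine unfolding of Definition \ref{subdef} --- the domain computations for (1)--(3), the pointwise label check with $\neg\neg\bG=\bG$ for (4), and the correct appeal to Lemma \ref{lemsub}(b) to make the case split $p=r\star q$ well-defined. In short, you have supplied the details the paper deliberately omits; there is nothing to fix.
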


\begin{Definition}[Recursive formula equation, solution] \label{receq}
A \textbf{recursive formula equation} (or  \textbf{recursive  equation}, or just \textbf{equation}) is an ordered  pair of formulas
$(\bbv, \bF)$, that we write as $ \bbv \eqrec \bF$,  such that: \squishlist
\item[$\phantom{ab}$ (R$_1$)] $\bbv$ is a positive atom;
\item[$\phantom{ab}$ (R$_2$)]  $\bF$ is  a compound formula;
\item[$\phantom{ab}$ (R$_3$)] $\bF(p) \in \{\bbv, \neg \bbv\}$,  for some $p \in \dom(\bF)$.
\squishend
 A \textbf{solution} of  $\bbv \eqrec \bF$ is a formula $\bG$ such that  $\bG = \bF[\bG/\bbv]$.
\hfill $\triangle$
\end{Definition}

We now discuss the previous definition.
To begin with, note that by (R$_2$),    a pair of atoms such as
$(\bbv, \bba)$ is   \emph{not} a recursive equation.
The reason to exclude such pairs is
to avoid  to consider $(\bbv, \bbv)$ | which would be a trivial equation, as every formula would be a solution |
and $(\bbv, \neg \bbv)$ |
which would have  no solution at all.
Up to now, our  choices are perfectly in line with \cite{Courcelle,coppo}.
In our setting,  we also impose the additional condition   (R$_3$). The aim of this clause is to exclude pairs
of the form $(\bbv, \bF)$ where neither
$\bbv$ nor $\neg \bbv$ actually occurs in $\bF$.
The reason is  that   pairs like
the one above would be trivial equations as well:
by Proposition \ref{subst}(3), $\bF$ itself would  be
 the unique solution, as   we have $\bF[\bG/\bbv] = \bF$ for every formula $\bG$.

We now turn attention to the literature on recursive types.
In this topic, the theorem which states
 existence and  uniqueness of solutions of recursive equations
is perhaps the most important result.
One usual way to prove  it is to show that  the mathematical space (in our case, it would be the set of formulas) forms a complete metric space with
respect to some  metric.  Then, the result  follows
by applying Banach's fixpoint theorem,  by using the fact
that the operation of substitution induces
a contractive map from the space  to itself  \cite{Courcelle,coppo}.
Another traditional way
 to prove that result requires to introduce  in the space  some  notion of approximation.  One then shows that   suitable  sequences of ``lower" (resp. ``upper") approximations
converge to a ``lower" (resp. ``upper") solution of the equation.
To prove the result, one eventually proves    that the two solutions coincide (see \eg \cite{MelVou,BT} and the references therein).

By contrast, in our setting we are able to prove the result in  a  direct and elementary way; our method does not require to  explicitly introduce any sort of metric or notions of approximation.
  We do  not claim that our \emph{result} is new, as there are several \emph{similar}  results in the literature.
However, we believe that
our \emph{proof} deserves some attention, as  it is quite simple and self--contained.

\begin{theorem}[Existence and uniqueness of solutions] \label{recthem}
Every recursive equation  \/$\bbv \eqrec \bF$\@  has
a unique solution \/$\bG$\@. Furthermore,
\/$\bG$\@ is ill--founded.  \end{theorem}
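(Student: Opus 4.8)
The plan is to construct the solution explicitly as a single infinitary tree, rather than to invoke a fixpoint theorem or approximation machinery. The key observation is that the defining equation $\bG = \bF[\bG/\bbv]$ says precisely that $\bG$ is obtained from $\bF$ by grafting a copy of $\bG$ itself at every leaf of $\bF$ labeled $\bbv$ or $\neg\bbv$, and this grafting repeats forever. So I would directly describe the positions of the intended solution as finite concatenations of ``$\bF$-blocks.'' Concretely, let $R \eqdef \{ r \in \dom(\bF) \st \bF(r) \in \{\bbv,\neg\bbv\}\}$ be the set of variable-leaves of $\bF$ (nonempty by (R$_3$)) and $S \eqdef \dom(\bF) \setminus R$. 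I would define
\[
\dom(\bG) \ \eqdef \ \bigcup_{n \in \NN} \big(R^{\star n} \star S\big) \, \cup \, \bigcup_{n\in\NN} R^{\star n},
\]
where $R^{\star 0} = \{\roo\}$ and $R^{\star(n+1)} = R^{\star n}\star R$; that is, a position of $\bG$ is a sequence of $n$ complete passes through variable-leaves of $\bF$ followed by one position inside a final copy of $\bF$. The labeling is then forced: on a position $p = r_1 \star \cdots \star r_n \star s$ with each $r_i \in R$ and $s \in S$, set $\bG(p) \eqdef \bF(s)$; and on a pure $R$-position one records whether the number of ``$\neg\bbv$'' passes traversed is even or odd, flipping the connective accordingly so that involutivity of negation is respected (this is the only delicate bookkeeping point, and it mirrors the $\neg\bG(q)$ clause of Definition \ref{subdef}).

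The verification splits into three parts. First, I would check that $\bG$ so defined is genuinely a formula: properties (T$_1$), (T$_2$) of Definition \ref{tree} and the leaf-condition (F) of Definition \ref{formula}. Here Lemma \ref{lemsub} does the essential work — part (b) guarantees that the decomposition of a position into $R$-blocks-plus-tail is unique, so the labeling is well-defined, while part (a) ensures the $S$-tails and the $R$-prefixes do not collide. Second, I would verify $\bG = \bF[\bG/\bbv]$ by comparing domains and labels directly from Definition \ref{subdef}: unfolding one $\bF$-block at the root of $\bG$ reproduces exactly the substitution $\bF[\bG/\bbv]$, using Proposition \ref{subst}(1)--(2) to match the atomic and compound cases. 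Third, ill-foundedness is immediate from the construction: since $R \neq \emptyset$, pick any fixed $\bar r \in R$ and set $f(n) \eqdef \bar r^{\star n}$; then $f(n) \sqsubset f(n+1)$ for all $n$, which is exactly the witness required by clause (4) of the Notation for ill-founded trees.

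For uniqueness, suppose $\bG$ and $\bG'$ are both solutions. The natural argument is by induction on position length: I would show $p \in \dom(\bG) \Leftrightarrow p \in \dom(\bG')$ and $\bG(p) = \bG'(p)$ simultaneously. Unfolding each solution once via its defining equation, the behavior at any position $p$ is determined either by $\bF$ directly (on the $S$-part, giving equality outright) or by the behavior of the solution at a strictly shorter position inside a recursively grafted copy (on the $R$-part). Since each unfolding strips off at least one nonempty $R$-block before we again consult the solution, the length of the ``residual'' position strictly decreases, so ordinary induction on $\len(p)$ closes the argument. The step I expect to be the main obstacle is making the recursion-on-position-length rigorous: because the solutions themselves are ill-founded, I cannot induct on the tree structure of $\bG$, so I must phrase the entire comparison as an induction on the natural number $\len(p)$ and argue that every appeal to the equation reduces that number — the explicit block decomposition above is precisely what makes this bookkeeping clean and lets me avoid any metric or limit argument.
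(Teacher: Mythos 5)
Your overall strategy is the paper's own: build the solution explicitly via a block decomposition of positions, with no metric and no approximation machinery. Two of your three parts are essentially identical to the paper's proof: the ill-foundedness witness $f(n) = \bar r^{\star n}$ is literally the paper's argument (v), and your uniqueness proof by induction on $\len(p)$ (valid because every $r \in R$ has $\len(r) \geq 1$, since $\bF$ is compound by (R$_2$) and hence $\roo \notin R$) is a legitimate and slightly more direct variant of the paper's two-step argument, which first pins down the domain of \emph{any} solution by Arden's rule ($X = S \cup (R \star X)$ iff $X = A$) and then compares labels by induction on the block index $n$ rather than on position length. That variant is a genuine small simplification: it handles domains and labels simultaneously and avoids stating Arden's rule as a separate principle.

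However, your existence construction contains a genuine error in the labeling rule. You apply the parity flip only on \emph{pure $R$-positions}, and on every position of the form $p = r_1 \star \cdots \star r_n \star s$ with $s \in S$ you set $\bG(p) \eqdef \bF(s)$ with no flip. The flip must be applied \emph{uniformly}: the correct rule (the paper's) is $\bG(p) \eqdef \bF(s^p)$ if the number of indices $j$ with $\bF(r_j) = \neg\bbv$ is even, and $\bG(p) \eqdef \neg\bF(s^p)$ if it is odd, regardless of whether $s^p = \roo$. With your rule the defining equation fails. Concretely, take $\bbv \eqrec \vee[\neg\bbv, \bba]$ with $\bba$ an atom distinct from $\bbv$; here $R = \{\vv 0 \ww\}$ and $\vv 1 \ww \in S$. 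Any solution satisfies $\bG = \vee[\neg\bG, \bba]$, hence $\bG(\vv 0,1 \ww) = (\neg\bG)(\vv 1 \ww) = \neg\bba$; but your rule labels the position $\vv 0 \ww \star \vv 1 \ww$ (one $R$-block, tail $s = \vv 1 \ww \in S$, not a pure $R$-position) by $\bF(\vv 1 \ww) = \bba$. The same failure occurs for connectives lying strictly inside $S$, e.g.\ $\bbv \eqrec \vee[\neg\bbv, \wedge[\bba,\bbb]]$. The source of the confusion is that a pure $R$-position is not a separate case at all: since $\roo \in S$, one has $r_1 \star \cdots \star r_n = r_1 \star \cdots \star r_n \star \roo$, so your second clause (and your extra union $\bigcup_n R^{\star n}$, which is redundant for the same reason) overlaps with the first and, where they overlap, contradicts it whenever the parity is odd. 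Two further points need tightening once the rule is fixed: well-definedness of the parity requires that the full factorization into $n$ blocks plus tail is unique and that the sets $R^{\star n} \star S$ are pairwise disjoint — Lemma \ref{lemsub}(b) only gives uniqueness of a \emph{one-step} splitting, so you must iterate it by induction, which is exactly part (ii) of the paper's proof; and the verification $\bG = \bF[\bG/\bbv]$ then goes through precisely because unfolding one block changes the parity count by one when $\bF(r_1) = \neg\bbv$, which is the paper's case analysis in (iii)(3).
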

\begin{proof}
Let $R$ and $S$ be as in Definition \ref{subdef}.
First, we observe that by  condition   (R$_2$)  of Definition \ref{receq}, $\roo \in S$.
In particular, $\roo \notin R$. Furthermore, by condition (R$_3$), $R$ is non--empty.

We now remark that by definition of substitution, if  $\bH$
is a solution
of the equation $\bbv \eqrec \bF$,
then the set of positions $\dom(\bH)$ has to satisfy $\dom(\bH) = S \cup (R \star \dom(\bH))$.
So, we are interested in studying   those sets of positions
$X$
such that  $ X= S \cup (R \star X)$.
To this aim, we define

\squishlist
\item[] {\centering
$ A_0 \ \eqdef \ S \enspace, \ \ \quad A_{n+1} \ \eqdef \  R \star A_n\enspace, \ \ \quad A_{< n} \ \eqdef \ \bigcup_{k < n}A_k  \enspace \  $ and $ \ \quad A \ \eqdef \  \bigcup_{n \in \NN} A_n \enspace.$ \par}
 \squishend

\noindent The key  property, that we show in  a moment, is that $A$ is the \emph{unique} set
 such that $ A= S \cup (R \star A)$.
This fact is known as \emph{Arden's Rule}
in the literature of formal languages
theory (see \eg \cite{manna}).

  We now  prove this fact in our setting (our proof is adapted from the one given in \cite{manna}).

\squishlist
\item[$\phantom{ab}$ (i)] \emph{Arden's Rule}: \ \ \  $ X= S \cup (R \star X)$ \ \  if and only if \ \
$X= A$.
 \squishend
 \noindent\emph{Proof of} (i).
 If  $X = A$, then   $A =  \bigcup_{n \in \NN} A_n = A_0 \cup \bigcup_{n >0}A_n = A_0 \cup (R \star \bigcup_{n \in \NN}A_n) =  S \cup (R \star A)$.

 To show the converse, let $X$ be such that $ X= S \cup (R \star X)$.
 For each $n \in \NN$  we define
$ X_0 \eqdef X \ , \ X_{n+1} \eqdef R \star X_n$.
By induction on $n$, we show  that for every  $n \in \NN$ we have
$X = A_{< n} \cup X_n$.
If $n=0$, then $  X = \emptyset \cup X_0 = X$.
Let  $n = m+1$ and assume $A_{< m} \cup  X_{m} =X$.
We have  $ A_{< m+1}   \cup X_{m+1} =
 A_0 \cup  (R \star  A_{< m})  \cup (R \star X_{m})
 = A_0 \cup (R \star (A_{< m} \cup  X_{m})) = S \cup (R \star X) = X
$.
Now,
since $ \roo \notin R$,  for  each $p \in X$   we have $p \notin X_{\len(p)+1}$, as  $\len(q) > \len(p)$
for every
$ q \in X_{\len(p)+1}$.
Hence, if $p \in  X =  A_{< \len(p)+1} \cup X_{\len(p)+1}$,
then   $p \in A_{< \len(p)+1} \subseteq A$.
Finally,
if $p \in A$, then $p \in  A_{n}$ for some $n \in\NN$. Therefore, $p \in A_n \subseteq
 A_{< n+1} \cup X_{n+1} = X$.
\qed

By (i),   the domain of \emph{any} solution of the recursive
equation $ \bbv \eqrec \bF$ has to be $A$.
However, this fact does not give us any hint about  how  to define  the \emph{values}
of a solution
$\bG$ (\ie $\bG(p)$, for $p\in A$).

To tackle this problem, we now  show the following  properties.

\squishlist
\item[$\phantom{ab}$ (ii)]  (a) For every $n\in \NN$, each  $ p \in A_n$
 can be  factorized in a \emph{unique} way as \linebreak $\phantom{ijjjjbj} p =  r^p_{0} \star \cdots \star  r^p_{n-1}\star s^p$,
where
 $r^p_{0} ,\ldots , r^p_{n-1} \in  R $  and  $ s^p \in S$.

$\phantom{ab}$\!\! (b) \, \, \,\!\!\! For every $n \in \NN$ and every $m \in \NN$, if $n \neq m$
then $A_n $ and $A_m$ are disjoint.

 \squishend
 \noindent\emph{Proof of} (ii).
(a) By induction on $n$.
If $n = 0$, then $ p \in A_0 = S$.
 By Lemma
 \ref{lemsub}(a),  for no $r \in R$ and
 $t \in \NN^\star$ we have $p = r \star t$. So, we set  $s^p \eqdef  p$.  If $n = m+1$,
then  $p \in A_{m+1} = R \star A_m$.
By Lemma \ref{lemsub}(b),  there is a  unique $r \in R$ and  a unique  $q \in A_m$
such that $p = r \star q$.
By inductive hypothesis, $q$ can be   written  as
$q=  r^q_{0} \star \cdots \star r^q_{m-1}\star s^q$.
Hence,    $p$ can be   factorized as $p=  r^p_{0} \star \cdots \star r^p_{n-1}\star s^p$, where   $r^p_0 \eqdef r$,
$r^p_{k} \eqdef r^q_{k-1}$ for every $ 1 \leq k < n$,
and $s^p \eqdef s^q$.

(b) Suppose, for a contradiction,  that  for some $p \in A$  there are
 $ n$ and $m$  in $ \NN$ such that $ n < m$,
 $ p \in A_{n}$ and $p \in A_{m}$. Let $n$ be the least natural number for which this fact holds.
 By Lemma \ref{lemsub}(a), we have $n > 0$, as for no $t \in \NN^\star$ it is the case that
$ p \in A_0 = S$ and $ p = r \star t \in A_m$. Suppose now that $n> 0$,
$ p = r \star t \in A_{n}$ and $ p = r' \star t' \in A_m$. Since $p \in R \star A$, we can apply  Lemma \ref{lemsub}(b) and we obtain
$r= r'$ and $t=t'$.
But then, we have
 $ t \in A_{n-1}$ and $t \in A_{m-1}$.
This contradicts the minimality of $n$.
 \qed

 By (ii), it follows that each $p \in A$ can be uniquely factorized as  $ p =  r^p_{0} \star \cdots \star  r^p_{n_p-1}\star s^p$, where
$n_p$ is the unique $ n \in \NN$ such that $ p \in A_n$.
We are now ready to define $\bG$.

 Let $p \in A$ and
let
 $m_p \in \NN$   be the cardinality
of  $\{  j < n_p \st \bF(r^p_j) = \neg \bbv\}$. We  set
$\dom(\bG) \eqdef A$ and

\squishlist
\item[] {\centering
 ${\bG}(p) \ \eqdef \ $ $\left\{
  \begin{array}{rl}
    \bF(s^p)\enspace &  \hbox{if $m_p$ is even} \\
    \neg\bF(s^p)\enspace &  \hbox{if $m_p$ is odd\enspace, \quad for  $p  \in \dom(\bG)$\enspace.}   \end{array}
\right.$\par}
\squishend

We now show that $\bG$ is a formula and that it is a solution of $\bbv \eqrec \bF$.

\squishlist
\item[$\phantom{ab}$ (iii)]

(1) \  $\roo \in \dom(\bG)$ and if    $q \sqsubseteq p$, then $ q \in \dom(\bG)$.

\hspace{0.49cm}(2) \ If  $\bG(p) \in \cV \cup \neg \cV$,  then  $p$ is a leaf of $\bG$.

\hspace{0.49cm}(3) \ $\bG = \bF[\bG/\bbv]$.
 \squishend
 \noindent\emph{Proof of} (iii).
(1)  $\roo \in S = A_0 \subseteq  \dom(\bG)$.
If $q  \sqsubseteq p$ then  either $q =
 r^p_0 \star \cdots \star r^p_{n_p-1}\star t$ and $ t \sqsubseteq s^p$,
or $q= r^p_0 \star \cdots  \star r^p_{k-1}\star t$, for $ k < n_p$
and  $t \sqsubset r^p_k$. In both cases, $ t \in S$. Thus,   $q \in A_{n_p} \cup  A_k \subseteq A = \dom(\bG)$.

(2) Let  $\bG(p) \in \cV \cup \neg \cV$. As $\bG(p)\in \{\bF(s^p) ,\neg \bF(s^p) \}$, we have $\bG(p)\in \cV \cup \neg \cV$ just in case $\bF(s^p)\in \cV \cup \neg \cV$.
Since    $s^p$ is a leaf of $\bF$ and $s^p \notin R$,  no proper extension of $p$ is in $\dom(\bG)$.
So, $p$ is a leaf of $\bG$.

(3)  By (i), $\dom(\bG) = A = S \cup (R \star A) = S \cup (R \star \dom(\bG))  =
\dom(\bF[\bG/\bbv]) $.
Let  $ p  \in \dom(\bG)$.
If $n_p=0$, then  $p = s^p \in S$ and $\bG(p) = \bF(p) = \bF[\bG/\bbv](p)$.
Suppose now that $ n_p>0$.
 Let  $q $ be $r^p_{1} \star \cdots \star r^p_{n_p-1}\star s^p$. By definition, $s^p = s^q$ (see the proof of (ii)(a)).
If  $\bF(r^p_0) = \bbv$, then $m_p = m_{q}$ and
 $\bG(p) =   \bG(q)= \bF[\bG/\bbv](p)$.
If  $\bF(r^p_0) =  \neg \bbv$, then $m_p = m_{q} +1$ and
 $\bG(p) =\neg \bG(q)= \bF[\bG/\bbv](p)$. \qed

 We now prove that $\bG$ is the unique solution of the recursive equation  $\bbv \eqrec \bF$.

\squishlist
\item[$\phantom{ab}$ (iv)] If  $\bH$ is a formula such that $\bH = \bF[\bH/\bbv]$, then $\bH = \bG$.
 \squishend
 \noindent\emph{Proof of} (iv).
By (i), $\dom(\bH) = A$.  We now prove that for every $n \in \NN$, for each $p \in A_n$ we have
 $\bH(p) = \bG(p)$.
We reason by induction on $n$.
If $ n = 0$, then $ p\in S$. Hence,
$\bH(p) = \bF[\bH/\bbv](p) = \bF(p) = \bG(p)$.
Suppose now that $ n = m+1$.
Let  $q $ be $r^p_{1} \star \cdots \star r^p_{n-1}\star s^p \in A_m$. By inductive hypothesis,  $\bH(q) = \bG(q)$.
  If  $\bF(r^p_0) = \bbv$, then $\bH(p) = \bF[\bH/\bbv](p)
= \bH(q) = \bG(q) = \bG(p)$.
 If  $\bF(r^p_0) = \neg \bbv$, then $\bH(p) = \bF[\bH/\bbv](p)
= \neg \bH(q) = \neg \bG(q) = \bG(p)$.
 \qed

Finally, we show that $\bG$ is ill--founded.

\squishlist
\item[$\phantom{ab}$ (v)] There exists a function $f : \NN \arr \dom(\bG)$ such that
$f(n) \sqsubset f(n+1)$, for every $n \in \NN$.
 \squishend

 \noindent\emph{Proof of} (v).
Recall that  $R$ is non--empty and that $\roo \notin R$. Let $r \in R$.
Define
\ $r^0 \eqdef \roo$  and $r^{n+1} \eqdef r \star r^n$, for each $n \in \NN$.
As $\roo \in S$,  we have  $r^n = r^n \star \roo \in A_n$
and $r^n \sqsubset r^{n+1}$, for every $n \in \NN$.
So, $\{r^n \st n \in \NN\} \subseteq A = \dom(\bG)$.
Thus, the function $f$ given by: $f(n) \eqdef r^n$ for every $n \in \NN$,
has the required property. \qed

The proof of Theorem \ref{recthem} is now complete.
\end{proof}

 \begin{Example} \label{A to A}
Let us consider the recursive equations
\squishlist
\item[] {\centering  (1) \ $\bu \ \eqrec \ \vee[\bu, \bu]$\enspace, \ \  \qquad
(2) \ $\bw \ \eqrec \  \wedge[\bw, \bw]$\quad \   and  \quad \  (3) \ $\bbv \ \eqrec \  \vee[\neg \bbv ,\bbv]$\enspace. \par}
\squishend

\noindent Let $\bU$, $\bW$ and $\bV$ be the solutions of the equations (1), (2) and (3) respectively.
We have  $\dom(\bU) = \dom(\bW) = \dom(\bV) =A$, where $A \eqdef \{ p \in \NN^\star \st p(k) \in \{ 0 ,1\}, $ for all $k <\len(p) \}$.
Moreover,  for $p \in A$, we have $\bU(p) = \vee$,
 $\mathbf{W}(p) = \wedge$ and
 \squishlist
\item[] {\centering
${\bV}(p) \ = \ $ $\left\{
  \begin{array}{rl}
    \vee\enspace&  \hbox{if  the cardinality of $\{ k \st p(k) = 0 \}$  is even} \\
    \wedge\enspace &  \hbox{if  the cardinality of $\{ k \st p(k) = 0 \}$  is  odd \enspace.} \\  \end{array}
\right.$ \par}
\squishend
In our setting,  $\bbv \eqrec \vee[\neg \bbv , \bbv]$ can be used to represent   the equation ``$X = X \to X$" which is a well--known  example of equation of \emph{mixed variance} in the literature of recursive types (see \eg \cite{MelVou}).
\hfill$\triangle$
 \end{Example}

\subsection{Derivations} \label{subder}
In this subsection, we introduce the notions of sequent and derivation. We also
show some  sequents which are derivable in our framework.
In this work, we   define sequents as special disjunctive formulas.  We think that this
choice  is convenient, as it  makes clearer the ``duality"
between sequents and environments (Definition \ref{envir}(2)).
Derivations are defined to be   trees labeled by sequents and rules. Our definition of derivation is
actually very similar to that of
 \emph{pre--proof} in \cite{PTLC} (very roughly, a pre--proof is a
 not necessarily well--founded
 derivation in
a sequent calculus for
classical logic with the
 $\omega$--rule).

 \begin{Definition}[Sequent] \label{seq}
We call   \textbf{sequent}   any  \emph{disjunctive} formula $\bF$ whose arity is $\{0,\ldots,n-1\}$, for some $n \in \NN$.
The set of all sequents is denoted by $\seq$.
  \hfill $\triangle$
\end{Definition}

\begin{Notation} \label{not}
 Let $\bS = \vee[\bF_0,\ldots, \bF_{n-1}]$ be a sequent.  Let  $\bG$ be a formula. We write
$\vee[\bF_0,\ldots, \bF_{n-1}, \bG]$,  and sometimes also $\bS \vee \bG$,
 for the sequent $\bT$ of arity
$\{0,\ldots,n\}$  defined as follows: $\dom(\bT) \eqdef \dom(\bS)
\cup \{ \vv n \ww \star q \st q \in \dom(\bG) \}$ and
\squishlist
\item[] {\centering

${\bT}(p) \ \eqdef \ $ $\left\{
  \begin{array}{rl}
    \bS(p)\enspace &  \hbox{if $p \in \dom(\bS)$} \\
    \bG(q)\enspace &  \hbox{if $p = \vv n \ww \star q$\enspace , \quad for  $p  \in \dom(\bT)$\enspace.}   \end{array}
\right.$
 \par}
\squishend
 In particular, if $n= 0$  then  $\bT = \FALSE \vee \bG =  \vee[\bG]$ (recall that $\FALSE$ is an abbreviation for $\vee[\ ]$).\hfill $\triangle$
\end{Notation}

 \begin{Definition}[Rule]
 A \textbf{rule}  is either an \emph{axiom
 rule} or a \emph{disjunctive rule} or a \emph{conjunctive rule}.
 \squishlist
 \item An \textbf{axiom rule} is an ordered triple $\Dis{\bbv, k,\ell}$, where   $\bbv \in \cV$, $k \in \NN$ and $\ell \in \NN$.
 \item
A \textbf{disjunctive rule} is an ordered triple  $\Dis{\vee, k,i_0}$, where $k \in \NN$ and $i_0 \in \NN$.
 \item A \textbf{conjunctive rule} is a ordered pair  $\Dis{\wedge, k}$, where $k\in \NN$.
  \squishend
  The set of all rules is denoted by $\rules$.
    \hfill $\triangle$
 \end{Definition}

 \begin{Definition}[Derivation] \label{deri}
 A \textbf{derivation}  is a tree $T$ labeled by
$\seq \times \rules$
such that
  for each $p \in \dom(T)$
 one of the following conditions  ($\axi$),  ($\vee$) and  ($\wedge$) holds.

\vspace{0.2cm}

{ \centering
\begin{tabular}{cc}
\!\!($\axi$) \!\! : \!\! $\left\{
  \begin{array}{ll}
 \mbox{\!\!(i)\!\!} &
  \hbox{\!\!$T(p) = \big( \vee[\bF_0,\ldots, \bF_{n-1}] \ , \ \Dis{\bbv, k,\ell} \big)$;} \\
 \mbox{\!\!(ii)\!\!}   &\hbox{\!\!$k< n$, $\ell < n$,    $\bF_k = \bbv$ and $\bF_\ell = \neg \bbv$;} \\
\mbox{\!\!(iii)\!\!} & \hbox{\!\!$p$ is a leaf of $T$.}
  \end{array}
\right.$
&
\!\!($\vee$) \!\! : \!\! $\left\{
  \begin{array}{ll}
 \mbox{\!\!(i)\!\!} &
  \hbox{\!\!$T(p) = \big(  \vee[\bF_0,\ldots, \bF_{n-1}] \ , \ \Dis{\vee, k,i_0} \big)$;} \\
 \mbox{\!\!(ii)\!\!}   &\hbox{\!\!$k < n$, $\bF_k = \arpa{\vv i \ww}{I}$ and $i_0 \in I$;} \\
\mbox{\!\!(iii)\!\!} & \hbox{\!\!$p \star \vv i \ww \in\dom(T)$ if and only if $i = i_0$;} \\
\mbox{\!\!(iv)\!\!} & \hbox{\!\!$T_\sL(p \star \vv i_0 \ww) = \ \vee[\bF_0,\ldots, \bF_{n-1},\bG_{\vv i_0 \ww}]$.}
  \end{array}
\right.$
\end{tabular}
\vspace{0.3cm}
\par}

{ \centering
\!\!($\wedge$) \!\! : \!\! $\left\{
  \begin{array}{ll}
 \mbox{\!\!(i)}\!\! &
  \hbox{\!\!$T(p) = \big(  \vee[\bF_0,\ldots, \bF_{n-1}] \ , \ \Dis{\wedge, k} \big)$;} \\
 \mbox{\!\!(ii)\!\!}   &\hbox{\!\!$k < n$ and $\bF_k = \arn{\vv i \ww}{I}$;} \\
\mbox{\!\!(iii)\!\!} & \hbox{\!\!$p \star \vv i \ww \in\dom(T)$ if and only if $i \in I$;} \\
\mbox{\!\!(iv)\!\!} & \hbox{\!\!$T_\sL(p \star \vv i \ww) = \ \vee[\bF_0,\ldots, \bF_{n-1}, \bG_{\vv i \ww}]$, for every $ i \in I$.}
  \end{array}
\right.$
\vspace{0.2cm}

\par}

\noindent  In the sequel, we use   $\pi, \rho,\sigma ,\ldots$ to range over derivations.
  We say that  \textbf{$\pi$ is a derivation of $\bS$} if $\pi_\sL(\roo) = \bS$, and we say
that  $\bS$  \textbf{is derivable} if there exists
a derivation $\pi$ of it.
\hfill$\triangle$
\end{Definition}

 \begin{remark} \label{remsubform}  Let  $\pi$ be a derivation.
Let $\pi_\sL(\roo) =   \vee[\bG_{0},\ldots, \bG_{m-1}]$,
 and  let $\pi_\sL(p) =  \vee[\bF_{0}, \ldots, \bF_{n-1}]$,
for some $p \in \dom(\pi)$.

(a) \emph{The subformula property.} \ \!
For every   $k < n$, there exists $\ell < m$ such that  $\bF_k$ is a subformula
 of   $\bG_\ell$.

(b) \emph{Leaves.} \ \! The position $p$ is a leaf of $\pi$
if and only if  $p$ is as in  ($\axi$)
of Definition \ref{deri}, or
$p$ is as in  ($\wedge$)
of Definition \ref{deri} and
$\bF_k = \TRUE$ (recall that $\TRUE$ is an abbreviation for $\wedge[\ ]$).

(c) \emph{Rules.} \ \! The rule $\pi_\sR(p)$ and  the sequent $\pi_\sL(p)$ completely determine the sequent $\pi_\sL(q)$,
for each $q$ which immediately extends
$p$ in $\dom(\pi)$. In other words, if $\rho$ and $\sigma$ are two derivations of the same sequent $\bS$,  then
 $\dom(\rho) = \dom(\sigma)$  and $\rho_\sR(p) = \sigma_\sR(p)$ for all $p \in \dom(\rho)$ together imply   $\rho = \sigma$. \hfill$\triangle$
 \end{remark}
Using  a more traditional notation for  derivations in the sequent calculus,  conditions ($\axi$),   ($\vee$) and ($\wedge$)
of Definition \ref{deri} can be respectively  written as follows:

\vspace{0.25cm}

{\small

{\centering
\begin{tabular}{ccc}

\AxiomC{}
\RightLabel{($\axi$)}
\UnaryInfC{$\vdash \ \ \Gamma \, , \,  \bF \, , \, \Delta \, , \, \neg \bF \, , \, \Sigma$}
\DisplayProof

&

  \def\fCenter{ \vdash \ \ }
\Axiom$\fCenter  \Gamma \, , \, \arpp{\vv i \ww}{I} \, , \, \Delta \, , \, \bF_{\vv i_0 \ww}$
\RightLabel{($\vee$)}
\UnaryInf$\fCenter  \Gamma \, , \, \arpp{\vv i \ww}{I} \, , \, \Delta$
\DisplayProof

&

  \def\fCenter{ \vdash \ \ }
\Axiom$\fCenter  \Gamma \, , \, \arnn{\vv i \ww}{I} \, , \, \Delta \, , \, \bF_{\vv i \ww} \quad  \ldots \mbox{ for all } i \in I$
\RightLabel{($\wedge$)}
\UnaryInf$\fCenter  \Gamma \, , \, \arnn{\vv i \ww}{I} \, , \, \Delta$
\DisplayProof

\end{tabular}
\vspace{0.25cm}

\par}

}

\noindent where  $\{ \bF ,  \neg \bF\}
= \{\bbv , \neg \bbv\}$ in  ($\axi$), $i_0 \in I$  in  ($\vee$), and  in  ($\wedge$) the expression
``$\ldots$ for all $i \in I$" means that
 there is one premise for each $i \in I$ .
In particular, if $I = \emptyset$, then
there is no premise above the conclusion.

The \emph{rules of inference} displayed above essentially correspond
to the \emph{normal rules} of Tait--calculus \cite{Tait68}. But in contrast with \cite{Tait68},  in our setting ill--founded derivations
  are permitted. As a consequence, we have the following results.

 \begin{theorem} \label{propder} \label{recder} \hfill
 \sqi
 \item[\emph{(1)}]
 Let \/$\bS =  \vee[\bF_{0}, \ldots, \bF_{n-1}]$\@ be a sequent. Suppose that for some  \/$k < n$\@  either
 \/$\bF_k$\@ is a disjunctive formula whose arity is not the empty set,  or  \/$\bF_k$\@ is a conjunctive formula.
Then,  \/$\bS$\@ is derivable.
\item[\emph{(2)}] Let  \/$\bbv \eqrec \bF$\@ be a recursive equation, and
let \/$\bG$\@ be its  unique solution.
Then, the sequent
\/$\vee[\bG]$\@ is
derivable.
\sqe
 \end{theorem}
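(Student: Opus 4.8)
The plan is to produce explicit derivations witnessing each claim, exploiting the fact that ill--founded derivations are permitted and that (by Remark \ref{remsubform}(c)) a derivation is uniquely determined by its root sequent and the rule it assigns at each position. Since derivations are themselves labeled trees $T : \NN^\star \arr (\seq \times \rules) \cup \{\infty\}$, I would construct $T$ directly by specifying $\dom(T)$ together with the sequent--label $T_\sL(p)$ and the rule--label $T_\sR(p)$ at every $p \in \dom(T)$, and then check that conditions ($\axi$), ($\vee$), ($\wedge$) of Definition \ref{deri} hold locally at each position.

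For part (1), fix $k<n$ witnessing the hypothesis. If $\bF_k = \arpp{\vv i\ww}{I}$ is disjunctive with $I \neq \emptyset$, pick any $i_0 \in I$ and apply the $(\vee)$--rule $\Dis{\vee,k,i_0}$ at the root; this produces a single premise, namely the sequent $\vee[\bF_0,\ldots,\bF_{n-1},\bF_{\vv i_0\ww}]$, which has one more slot than $\bS$. If $\bF_k = \arnn{\vv i\ww}{I}$ is conjunctive, apply $\Dis{\wedge,k}$ at the root, yielding one premise $\vee[\bF_0,\ldots,\bF_{n-1},\bF_{\vv i\ww}]$ for each $i \in I$ (and, in the degenerate case $I = \emptyset$, no premise at all, so the root is already a leaf by Remark \ref{remsubform}(b) and we are done). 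The only obligation is to exhibit \emph{some} derivation of each premise so that the whole tree is a legitimate derivation. The cleanest route is to observe that I never need to actually close the branches: because ill--founded derivations are allowed, I may keep applying a fixed $(\vee)$-- or $(\wedge)$--rule forever. Concretely, I would define $T$ on the infinite sequence of positions obtained by repeatedly selecting the distinguished slot and re--introducing it, giving an ill--founded derivation of $\bS$ whose every position satisfies the relevant clause of Definition \ref{deri}. Thus no well--foundedness check is required, which is precisely the expressive gain over \cite{Tait68}.

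For part (2), let $\bG$ be the unique solution of $\bbv \eqrec \bF$ guaranteed by Theorem \ref{recthem}, so $\bG = \bF[\bG/\bbv]$ and $\bG$ is ill--founded. The goal sequent is $\vee[\bG]$, a one--slot disjunction. Since $\bF$ is a compound formula (condition (R$_2$)), $\bG = \bF[\bG/\bbv]$ has the same top connective as $\bF$ up to the parity bookkeeping of the substitution, hence $\bG$ itself is either disjunctive or conjunctive; this places us exactly in the situation of part (1) with $n=1$ and $k=0$. I would therefore build the derivation of $\vee[\bG]$ by the same top--down recipe: apply a $(\vee)$-- or $(\wedge)$--rule to the unique slot, decompose $\bG$ into its immediate subformulas $\bG_{\vv i\ww}$, and iterate. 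The key structural point making this work is that the positions of $\bG$ form the set $A$ from the proof of Theorem \ref{recthem}, which is closed under the factorization $p = r^p_0 \star \cdots \star r^p_{n_p-1} \star s^p$; reading $\bG$ along its branches never terminates (by ill--foundedness), which is exactly what an ill--founded derivation needs.

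The main obstacle I anticipate is not any single local verification but the global bookkeeping needed to present the infinite derivation tree rigorously: I must specify $\dom(T)$ and the labels $T_\sL,T_\sR$ by a uniform recursion along $\dom(\bG)=A$ and then confirm, for \emph{every} $p\in\dom(T)$ simultaneously, that clause ($\vee$) or ($\wedge$) holds and that the premise sequents match up via Notation \ref{not}. Because the tree is ill--founded, I cannot verify this by induction on the height; instead I would define the labeling explicitly (using the factorization data for positions of $\bG$) and check the local conditions pointwise, leaning on Remark \ref{remsubform}(c) to know the construction is forced and on Proposition \ref{subst}(2) to track how $\vee$/$\wedge$ and the subformulas behave under the identity $\bG=\bF[\bG/\bbv]$. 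Once the labeling is pinned down, each local check is routine, so the substance of the argument is organizing the construction rather than any deep combinatorial difficulty.
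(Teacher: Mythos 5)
Your part (1) is correct and is essentially the paper's own argument: the paper makes your ``keep re-applying the fixed rule to the distinguished slot forever'' recipe explicit by taking $\dom(T) = \{ p \in \NN^\star \st p(j)= i_0$ for all $j <\len(p) \}$ in the disjunctive case (resp.\ $\{ p \in \NN^\star \st p(j) \in I$ for all $j <\len(p) \}$ in the conjunctive case), putting the same rule $\Dis{\vee,k,i_0}$ (resp.\ $\Dis{\wedge,k}$) at every node, and letting each node's sequent extend its parent's by one more copy of the relevant immediate subformula. This is sound because rules only ever \emph{append} formulas to sequents, so slot $k$ still holds $\bF_k$ at every node, and every position then satisfies clause ($\vee$) or ($\wedge$) of Definition \ref{deri} locally; no well-foundedness needs to be checked.

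Part (2), however, contains a genuine confusion. The correct move --- which you do state in passing --- is: by Proposition \ref{subst}(2), $\bG = \bF[\bG/\bbv] = \Diamond_I \bG_{\vv i \ww}$ has the same outermost connective and the same arity $I$ as $\bF$; by (R$_3$), $I \neq \emptyset$ (you never verify this, and it is needed, since part (1) excludes disjunctive formulas of empty arity --- indeed $\vee[\FALSE]$ is not derivable by Remark \ref{remcons}(b)); then invoke part (1) with $n=1$, $k=0$ and stop. Instead, you go on to describe the derivation as a ``uniform recursion along $\dom(\bG)=A$,'' decomposing $\bG$ into its immediate subformulas and iterating along the branches of $\bG$, with the ill-foundedness of $\bG$ supplying the ill-foundedness of the derivation. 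That picture is wrong on both counts. First, the part (1) derivation never descends into the subformulas $\bG_{\vv i \ww}$ at all: its domain is $\{p \st p(j) = i_0\}$ or $\{p \st p(j) \in I\}$, not $A$, and its sequents are $\vee[\bG],\ \vee[\bG,\bG_{\vv i_0\ww}],\ \vee[\bG,\bG_{\vv i_0\ww},\bG_{\vv i_0\ww}],\ \ldots$, the appended formulas being left untouched forever. Second, a construction that genuinely followed the tree structure of $\bG$ would break: immediate subformulas of $\bG$ can be atoms. For example, for $\bbv \eqrec \wedge[\bbv,\bba]$ the solution is $\bG = \wedge[\bG,\bba]$, and the conjunctive rule forces a premise $\vee[\bG,\bba]$ whose slot $\bba$ admits no rule (no axiom applies, and atoms cannot be decomposed); the only way to continue is to ignore that slot and re-attack slot $0$, i.e.\ to fall back on exactly the part (1) construction. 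Relatedly, the ill-foundedness of the derivation has nothing to do with the ill-foundedness of $\bG$: part (1) produces ill-founded derivations even for well-founded sequents such as $\vee[\wedge[\bba,\bbb]]$. So keep your part (1) as is, and in part (2) replace the branch-following bookkeeping by the two one-line checks ($\bG$ compound, arity nonempty) followed by an appeal to part (1).
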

 \begin{proof} (1)
Let $\bF_k = \vee_I \bG_{\vv i \ww}$, for some
$I \neq\emptyset$. Let
$ i_0\in I$. We define a tree $T$ labeled by
$\seq \times \rules$ as \linebreak

\vspace{-0.45cm}
\noindent follows:
$\dom(T) \eqdef \{ p \in \NN^\star \st p(k)= i_0, $ for all $k <\len(p) \}$ and

\squishlist
\item[] {\centering
${T}(p) \ \eqdef \ $ $\left\{
  \begin{array}{cl}
   \big( \bS \ , \ \Dis{\vee, k , i_0} \big)\enspace&   \hbox{if $p = \roo$} \\
   \big(    T_\sL(q) \vee  \bG_{\vv i_0\ww} \ , \ \Dis{\vee, k , i_0} \big)\enspace &  \hbox{if $p = q \star \vv i_0\ww$\enspace, \qquad for  $ p \in \dom(T)$\enspace.}    \end{array}
\right.$ \par}
\squishend
\noindent
Suppose now that $\bF_k = \wedge_I \bG_{\vv i \ww}$.  We analogously define a tree $U$ labeled by
$\seq \times \rules$ as follows: \linebreak

 \vspace{-0.45cm}
\noindent
$\dom(U) \eqdef \{ p \in \NN^\star \st p(k) \in I, $ for all $k <\len(p)\}$ and

\squishlist
\item[] {\centering
${U}(p) \ \eqdef \ $ $\left\{
  \begin{array}{cl}
   \big(  \bS \ , \ \Dis{\wedge, k} \big)\enspace&   \hbox{if $p = \roo$} \\
   \big(    U_\sL(q) \vee  \bG_{\vv i \ww} \ , \ \Dis{\wedge, k}  \big)\enspace &  \hbox{if $p = q \star \vv i \ww$\enspace, \qquad for  $ p \in \dom(U)$\enspace.}    \end{array}
\right.$ \par}
\squishend
Since each $p \in \dom(T)$ (resp. $p \in \dom(U)$)
satisfies condition
 $(\vee)$ (resp. $(\wedge)$) of Definition \ref{deri},
$T$  (resp. $U$)   is a derivation of $ \bS$.

(2) By condition (R$_2$) of Definition \ref{receq}, either $\bF = \vee_I \bF_{\vv i \ww}$ or $\bF = \wedge_I \bF_{\vv i \ww}$. Furthermore, by
 (R$_3$) we have that
$I \neq\emptyset$, as $\bbv$ or $\neg\bbv$ (or both)
must occur in $\bF$.
For $\Diamond \in \{\vee, \wedge\}$,  we have, by  Proposition \ref{subst}(2), that
  $ \bG =  \bF[\bG/\bbv] = (\Diamond_I \bF_{\vv i \ww})[\bG/\bbv] = \Diamond_{I}(\bF_{\vv i \ww}[\bG/\bbv]) = \Diamond_I \bG_{\vv i \ww}$. Hence, $\bG$ is a compound formula whose arity is $I \neq \emptyset$.
Then, we can apply (1) above to $\bS = \vee[\bG]$
and obtain the desired result.
     \end{proof}

\begin{remark} \label{remcons} Let  $\bU$, $\mathbf{W}$, and $\bV$ be the formulas defined   in Example \ref{A to A}.

(a) By  Theorem \ref{recder}(2),  for each $\mathbf{M}\in \{ \bU, \mathbf{W},\bV\}$
the sequent $\vee[\mathbf{M}]$
 is derivable.
But, since the  subformulas of $\mathbf{M}$  are neither atoms nor equal to $\TRUE$,       it follows from Remark \ref{remsubform}(a)(b) that
every derivation of  $\vee[\mathbf{M}]$ has to be ill--founded.

(b)  There are several sequents which are not derivable. For instance,
 $\FALSE$, $\vee[\FALSE], \vee[\FALSE,\FALSE]$, \ldots. Also, for each atom $\bba$  the sequents  $\vee[\bba]$,
 $\vee[\bba,\bba]$, $\vee[\bba,\bba,\bba]$,
 \ldots are not derivable.

(c) Even though $\FALSE$ is not derivable, our system is \emph{inconsistent}
in the sense that there is some formula $\bF$ such that
both  $\vee[\bF]$ and $\vee[\neg\bF]$ are derivable.
For, consider  $\bU$ and $\mathbf{W}$, and note that $\bU = \neg\mathbf{W}$. By (a) above, $ \vee[\bU]$ and
 $ \vee[\bW]$
are derivable.

(d) The \emph{cut--rule} is \emph{not admissible} in our
system, \ie it is not true that
\squishlist
\item[$\phantom{ab}$ (CUT)] \ for every sequent $\bS$ and every formula $\bF$,
if $\bS \vee \bF$
and $\bS \vee \neg \bF$ are  derivable, then so is
$\bS$.
\squishend
For, let $\bS$ be $\FALSE$.
By (c) above, both   $\FALSE\vee \bU = \vee[\bU]$ and
and $\FALSE\vee \neg \bU = \FALSE\vee \mathbf{W}   = \vee[\mathbf{W} ]$
are derivable. But by (b) above, $\FALSE$ is not derivable.
 \hfill$\triangle$
\end{remark}

\section{Interactive Semantics} \label{sec4}

We now present our interactive semantics
for derivations in infinitary classical logic.

Let $\bS$ be a sequent.
Recall that in this paper
we are not interested
in studying the concept of derivability, \ie  ``$\bS$ is derivable." Rather,
the concept that we want to  analyze is    ``the test $\cT$ comes from a derivation
of \/$\bS$", in the sense we now make clear.
Before giving the formal definitions, we explain at an informal  level the
 three basic notions of our semantics:
\emph{test}, \emph{environment} and \emph{interaction tree}. \\

\vspace{-3.5mm}

\noindent \textbf{Tests}. \  In this paper,  a test is nothing but
 a tree $\cT$ labeled by $\rules$ such that
$\dom(\cT) = \NN^\star$. This is actually the official definition of test  (that we repeat, for convenience, in the next subsection). Tests have to be thought as ``derivation candidates" for a derivation $\pi$  of a sequent
$\bS$. More precisely,  let the \emph{skeleton of the derivation $\pi$ of $\bS$} be
the tree $T$ labeled by $\rules$ whose domain is equal to $\dom(\pi)$ and such that $T(p) = \pi_\sR(p)$, for all $p \in \dom(\pi)$.  In other words,
the skeleton is just the object obtained
from a derivation  by erasing all the sequent information.
Then,
we say that \emph{the test $\cT$ comes from the derivation $\pi$ of $\bS$}
(or that $\cT$ is a ``successful candidate")
if   the skeleton of $\pi$ is ``contained"
into $\cT$ (see Definition \ref{interpretation} for the precise definition).
In general,  it is not  always the case that a test
contains the skeleton of some derivation.
In this paper, tests have to be understood as  \emph{untyped} objects in the sense that they are not necessarily related to formulas and sequents:
it may be the case that a test comes from a single derivation, or several derivations, or no derivation at all. The semantics we  define gives us a precise answer to the problem of determining when a test comes from a derivation or not.

 Finally, we also point out that the definition of ``to come from"
is \emph{syntactical}, as we use
the syntactical notions of derivation and skeleton  to define it.\\

\vspace{-3.5mm}

\noindent \textbf{Environments and interaction trees}. \
Technically speaking, an environment is just
the negation $\neg \bS$ of a sequent $\bS$ (in particular, it is not a sequent).  In our setting,  tests and environments
\emph{interact}.
To be more specific, a pair
$\big(\cT,\neg \bS\big)$, called
\emph{configuration},
uniquely determines a tree which is labeled by configurations of the previous kind, or by  an error symbol $\Uparrow$. We call this tree the \emph{interaction tree of} $\big(\cT,\neg \bS\big)$.
The construction of the interaction tree is \emph{dynamical} in the sense that it can be seen as the (possibly
infinitary) ``unfolding" of the transition relation |  of a suitable  transition system |  starting from  $\big(\cT,\neg \bS\big)$. The procedure which determines the interaction tree has also a very simple and
natural  \emph{game interpretation}.
Now, recall that the statement of the \emph{interactive completeness theorem} is:
\vspace{-0.00cm}
\squishlist
\item[] {\centering
$\cT$ comes from a derivation  of $\bS$ \ \ \ \ \ \ if and only if \ \ \ \ \ \ $\left.
  \begin{array}{l}
  \hbox{the interaction between $\cT$ and $\neg \bS$} \\
  \hbox{does not
produce errors.}   \end{array}
\right.$ \par}
\squishend

\noindent We can now intuitively explain
how errors
come into play:
if for some position $p$ in the domain of the interaction tree of $\big(\cT,\neg \bS\big)$ it is the case that the label is
$\Uparrow$, then we  say that \emph{the interaction between the test $\cT$ and the environment $\neg \bS$ produces an error}.
 Observe that the notion of ``to produce an error" is
\emph{semantical} in the sense that we use the interaction
tree to determine the eventual presence of errors. In particular, we make no use of  the syntactical notions of derivation and skeleton. \\



\vspace{-0.3cm}

We now develop the technical part of this section.
In Subsection \ref{sec trees} we give the formal definitions
of test, environment and interaction tree. We also give
a game theoretical interpretation  of our   interaction trees.
 In Subsection \ref{compsec} we  prove the interactive completeness theorem.

\subsection{Interaction trees} \label{sec trees}

\begin{Definition}[Test, environment, configuration]
\label{envir} \hfill
\sqi
\item[(1)] A  \textbf{test} is a tree  $T$ labeled by $\rules$ such that $\dom(T) = \NN^\star$. The set of all tests is denoted by $\TES$, and
we use  $\cT, \cU, \ccW,\ldots$ to denote its elements.
\item[(2)] We call \textbf{environment}   any formula
which is the negation of a sequent. That is, an environment
is a  \emph{conjunctive} formula
whose arity is $\{0,\ldots,m-1\}$, for some
$m \in \NN$.
The set of  all environments is denoted by $\envir$.
\item[(3)] We define the set $\conf$ of  \textbf{configurations} as
\squishlist
\item[]{ \centering
$\conf \ \eqdef \  \big(  \TES \!  \ \times \! \ \envir \big) \ \cup \ \{ \Uparrow\}$,
\par}
\squishend
\noindent where $\Uparrow$ (pronounced ``error")    is any object such that  $\Uparrow \ \notin \TES \times \envir$.
In the sequel, we use the letters $c, d, e , \ldots $ to denote configurations.\hfill$\triangle$
\sqe

\end{Definition}

\begin{Notation}
Let $\bE  = \wedge[\bG_0,\ldots, \bG_{m-1}]$
 be an environment.
Let    $\bH$ be a formula. In the sequel, we write
$\wedge[\bG_0,\ldots, \bG_{m-1}, \bH]$,   and when convenient also   $\bE \wedge \bH$,
 for the environment
$ \neg \big(\vee[\neg \bG_0,\ldots, \neg \bG_{m-1}, \neg\bH]\big)$  of arity $\{0, \ldots , m\}$. Recall that the sequent $\vee[\neg \bG_0,\ldots, \neg \bG_{m-1}, \neg\bH]$ is defined in
Notation \ref{not}.
 \hfill $\triangle$

\end{Notation}

 \begin{Definition}[Interaction tree, production of errors] \label{comptree}
 Let $c$ be a configuration.
The  \textbf{interaction tree of $c$}
is the tree $T$ labeled by $\conf$
 defined as follows.

\squishlist
\item[$\phantom{ab}$ (C$_1$)]  $\roo \in \dom(T)$,  and $T(\roo) \eqdef c$.

\item[$\phantom{ab}$ (C$_2$)]  Suppose that $p \in \dom(T)$ has been
 defined. We define the immediate  extensions of
 $p$ in $\dom(T)$  \linebreak    $\phantom{ab i}$  and their labels by cases as follows:

\squishlist

\item[$\phantom{ab}$ ($\Uparrow_1$)] \hspace{-0.11cm}If $T(p) = \ \Uparrow$, then $p$ is  a leaf of $T$.

\item[$\phantom{ab}$ ($\axi$)] \hspace{-0.07cm}If  $T(p) =  \big( \cT  \, , \, \wedge[\bG_0,\ldots, \bG_{m-1}]    \big)$,
  $\cT(\roo) = \Dis{\bbv, k,\ell}$,  $k <m$, $\ell <m$, $\bG_k = \neg \bbv$ and $\bG_\ell =  \bbv$, then $p$ \\
$\phantom{lii}$ is a leaf of $T$.

\item[$\phantom{ab}$ ($\vee$)] If  $T(p) =  \big( \cT  \, , \, \wedge[\bG_0,\ldots, \bG_{m-1}]    \big)$,
$\cT(\roo) = \Dis{\vee, k,i_0}$, $k < m$, $\bG_k = \arnnn{\vv i \ww}{I}$ and $ i_0 \in I$, then \linebreak
$\phantom{iii}$ $ p \star \vv i \ww \in \dom(T)$ if and only if $i = i_0$,
and $T(p \star \vv i_0 \ww) \eqdef \ \big( \cT_{\vv i_0 \ww}  \, , \, \wedge[\bG_0,\ldots, \bG_{m-1}, \bH_{\vv i_0 \ww}] \big)$.

\item[$\phantom{ab}$ ($\wedge$)] If   $T(p) =  \big( \cT  \, , \, \wedge[\bG_0,\ldots, \bG_{m-1}]    \big)$,
$\cT(\roo)= \Dis{\wedge, k}$,  $k < m$ and $\bG_k = \arppp{\vv i \ww}{I}$, then  $ p \star \vv i \ww \in \dom(T)$ \linebreak
$\phantom{lii}$ if and only if $i \in I$,
and $T(p \star \vv i \ww) \eqdef \ \big( \cT_{\vv i \ww}  \, , \, \wedge[\bG_0,\ldots, \bG_{m-1}, \bH_{\vv i \ww}] \big)$  for each $i \in I$.
\item [$\phantom{ab}$ ($\Uparrow_2$)]  \hspace{-0.20cm} In all the other cases, $ p \star \vv i \ww \in \dom(T)$ if and only if $i = 0$  and
 $T(p \star \vv 0 \ww) \eqdef \ \Uparrow$.
\squishend
\squishend
We denote the interaction tree of a configuration
$c$ as $\CT(c)$.

Let $\cT$ be a test,  and let $\bE$ be an environment.
We say that the \textbf{the interaction between $\cT$ and $\bE$ does not produce errors}
if
\squishlist
\item[] \centering{there is \emph{no} $p \in \dom(\CT(\big(\cT,\bE\big)))$
such that \,\! $\CT(\big(\cT,\bE\big))(p) = \ \Uparrow$ \enspace.

\vspace{-0.57cm}
\hfill$\triangle$ \par}
\squishend
\end{Definition}

We now discuss the previous bunch of definitions.\\

\vspace{-0.35cm}

\noindent \textbf{Global view of the interaction tree: interaction and transition systems}. \
The procedure which determines the interaction tree has to be read as follows. First, we have
 $\roo \in \dom(\CT(c))$, by (C$_1$).
From $\roo$ and its label,
we calculate, by using the clauses in (C$_2$), all the positions
of length one in  $\dom(\CT(c))$. Now,
we do the same thing for
each position
of length one
 in $\dom(\CT(c))$, and we  obtain
 all the positions
of length two in $\dom(\CT(c))$, and so on. This procedure might  not stop
and produce an ill--founded tree.

We call  $\CT(c)$   \emph{interaction} tree
because (in the case $c \neq \ \Uparrow$)
when a test and an environment are combined into a configuration, they \emph{interact}. Namely, they evolve into  new configurations according to
the clauses given in Definition \ref{comptree}.
 Furthermore, given a configuration $c$ its interaction tree $\CT(c)$ is defined
\emph{dynamically}: at each step one checks which clause
of Definition \ref{comptree} holds and then (possibly) proceeds to the
next step.  Indeed, one
can recognize the computational structure of \emph{transition system}  here: $ \conf$
is the set of   \emph{states} and
the \emph{transition relation} $\leadsto$ is
given according to the clauses
in (C$_2$).
Here,   the configuration  $\Uparrow$ represents a final state.
Thus, the interaction tree of $c$ can be seen
 as the (possibly
infinitary) ``unfolding" of the  transition relation $\leadsto$ starting from the state  $c$.\\

\vspace{-0.35cm}

\noindent \textbf{Local view of the interaction tree: a game theoretical interpretation}. \
At first view,   clauses (C$_2$)($\axi$),
(C$_2$)($\vee$) and (C$_2$)($\wedge$)
 of Definition \ref{comptree}
seem to be just  the ``dual"  of   conditions
($\axi$), ($\vee$) and ($\wedge$)
of
Definition \ref{deri} respectively.
This is admittedly true,  but notice that
in Definition \ref{deri} we use ($\axi$), ($\vee$) and ($\wedge$)  to \emph{define} the notion of derivation, whereas here
 we use (C$_2$)($\axi$),
(C$_2$)($\vee$) and (C$_2$)($\wedge$) to (possibly) \emph{find errors}. Furthermore,
there is nothing which   corresponds to  clauses
(C$_2$)($\Uparrow_1$) and (C$_2$)($\Uparrow_2$) in Definition \ref{deri}. On the other hand,
these clauses are  crucial to the aim of finding errors.
Hence, even though derivations and interaction trees
seem to be similar,
they have  different purposes.

We can now  give a very simple and natural game interpretation
to the objects introduced so far. Specifically,
the interaction tree can be seen as   a  \emph{debate} between two players:   $P$ (Proponent) and $O$ (Opponent). The moves of the game
are determined by   the clauses   given in Definition \ref{comptree}.
In our game,
$P$ is always in charge of tests,
whereas
$O$ is always in charge of environments.
Given a configuration $c$,
$P$ (possibly)  makes a question and $O$
answers to the question by  producing a set of configurations. Then, for each  configuration produced by  $O$,
$P$ (possibly) makes a question and $O$
answers to the question by  producing a set of configurations,  and so on.
The debate starts with a given configuration $c$ different from $\Uparrow$. Then, for each configuration $d$ given or  produced so far, the two players
 behave as follows.
\sqi
\item[$\phantom{ab}$($\Uparrow$G)] \ \  If $d = \ \Uparrow$, then  $P$ does not make any question and  $O$ does not produce any configuration.

\sqe
Let  $d =
\big( \cT  \, , \, \wedge[\bG_0,\ldots, \bG_{m-1}]    \big)$. Then, $P$ asks $\cT(\roo)$
and   $O$  answers as follows.
\sqi
\item[$\phantom{ab}$ ($\axi$G)]  If $P$   asks  $\Dis{\bbv, k,\ell}$, then
$O$  checks the formulas $\bG_k$ and  $\bG_\ell$.
If $\bG_k = \neg \bbv$ and $\bG_\ell =  \bbv$, then   $O$
does \linebreak  $\phantom{ijiiii}$ not
produce any configuration.
Otherwise,
$O$
produces $\Uparrow$.

\item[$\phantom{ab}$ ($\vee$G)]  If $P$   asks  $\Dis{\vee, k,i_0}$, then
$O$   checks the formula $\bG_k$.
If  $\bG_k = \arnnn{\vv i \ww}{I}$ and $ i_0 \in I$,
then  $O$
 produces \linebreak  $\phantom{ijij}$
$\big( \cT_{\vv i_0 \ww}  \, , \, \wedge[\bG_0,\ldots, \bG_{m-1}, \bH_{\vv i_0 \ww}] \big)$.
Otherwise,
$O$
produces $\Uparrow$.

\item[$\phantom{ab}$ ($\wedge$G)]  If $P$   asks  $\Dis{\wedge, k}$, then
$O$   checks the formula $\bG_k$.
If  $\bG_k = \arppp{\vv i \ww}{I}$,
then  $O$
 produces \linebreak  $\phantom{ijij}$
$\big( \cT_{\vv i \ww}  \, , \, \wedge[\bG_0,\ldots, \bG_{m-1}, \bH_{\vv i \ww}] \big)$, for each $i \in I$.
Otherwise,
$O$
produces $\Uparrow$.

  \sqe
Of course, up to terminology and the rearrangement of clause ($\Uparrow_2$)
inside the other clauses, this is  just a more informal reformulation  of the clauses given in  Definition \ref{comptree}.  We say that $O$ \emph{wins the debate starting
from $c$} if at some stage it is able to exhibit
an error, \ie to produce  the configuration
$\Uparrow$. Otherwise, we say that $P$ \emph{wins the debate starting
from $c$}.
With this game theoretical intuition in mind,
the statement of the interactive completeness theorem
can be reformulated as follows: $P$  wins
the debate starting
from $c = (\cT, \neg \bS)$  just in case
$\cT$ comes from a derivation  of $\bS$.
\\

\vspace{-0.35cm}

\noindent \textbf{Special cuts.} \ A configuration of the form
$\big( \cT  \, , \, \wedge[\bG_0,\ldots, \bG_{m-1}]    \big)$ can also be understood as a \emph{special cut}, \ie as a  cut
  of the following special form
(here we use a more traditional notation)

\vspace{-1.5mm}

{\small

{\centering
\AxiomC{$ \vdots \pi$}
\noLine
\UnaryInfC{$\vdash  \bF_0,\ldots,  \bF_{m-1}$}
\AxiomC{$ \vdots z_{\bG_0}$}
\noLine
\UnaryInfC{$\vdash \bG_0$}

\AxiomC{\!\!\!\!\!\!$\ldots$\!\!\!\!\!\!}

\AxiomC{$ \vdots z_{\bG_{m-1}}$}
\noLine
\UnaryInfC{$\vdash  \bG_{m-1}$}
\RightLabel{cut}
\QuaternaryInfC{}
\DisplayProof
\par} }

\vspace{2.0mm}

\noindent where we simultaneously  cut $\bF_k$ with $\bG_k$
for each $k < m$. Here, the derivation $\pi$ (in the sense of the previous section)  corresponds to the tests $\cT$ and, for each formula  $\bG$ of our logic, the derivation $z_\bG$ is  given by:
 if $\bG$ is an atom, then  $\vdash \bG$ has no premises; if $\bG = \arpp{\vv i \ww}{I}$ or
$\bG =  \arnn{\vv i \ww}{I}$,  then there is one premise  $ \vdash \bF_{\vv i \ww}$  for each   $i \in I$
(thus, $z_\bG$ is not a derivation in the sense of the previous section, and it is akin to a derivation of the subformula tree of $\bG$).
In addition,
 the formulas  $\bF_k$ and $\bG_k$ need not be the negation of  each other.
Hence, \emph{special cuts
are not cuts in the ordinary sense}, and  in particular,
special cuts have nothing to do with the notion of cut discussed in the previous section.

Under this interpretation, an environment $\wedge[\bG_0,\ldots, \bG_{m-1}] $ can be naturally understood as a \emph{conjunction} of formulas, as it represents a sequence $\vdash \bG_0 \, \ldots \,  \vdash \bG_{m-1}$ of (unary) traditional sequents.
With this picture in mind, the clauses given in
 Definition \ref{comptree} transform a  special cut
 into a set of special cuts or $\Uparrow$. So, they work as  steps of reduction for a procedure of cut--elimination
 for special cuts.
For instance,  in the case of  (C$_2$)($\vee$)
we  obtain after one step of reduction (here $\bF_k = \vee_{J} \bL_{\vv i \ww}$ is any disjunctive formula such that   $i_0 \in J$)

\vspace{-1.5mm}

{\small
{\centering
\AxiomC{$ \vdots \pi_{\vv i_0 \ww}$}
\noLine
\UnaryInfC{$\vdash  \bF_0,\ldots, \vee_{J} \bL_{\vv i \ww} , \ldots, \bF_{m-1},\bL_{\vv i_0\ww}$}
\AxiomC{$\!\!\!\!\!\!\!\!\!\!\! \vdots z_{\bG_0}$}
\noLine
\UnaryInfC{$\vdash \bG_0$ \quad $\ldots$}

\AxiomC{$\!\!\!\!\!\!\!\!\!\!\! \vdots z_{\bG_k}$}
\noLine
\UnaryInfC{$\vdash \wedge_{I} \bH_{\vv i \ww}$\quad $\ldots$}

\AxiomC{$ \vdots z_{\bG_{m-1}}$}
\noLine
\UnaryInfC{$\vdash \bG_{m-1}$}

\AxiomC{$ \vdots z_{\bH_{\vv i_0 \ww}}$}
\noLine
\UnaryInfC{$\vdash \bH_{\vv i_0 \ww}$}
\RightLabel{cut}
\QuinaryInfC{}

\DisplayProof \par} }

\vspace{0.25cm}

\noindent which is again a
special cut.
The present discussion on special cuts is quite
informal, but the  point here is to develop
another intuition about the  concepts  introduced so far.

\subsection{Interactive completeness} \label{compsec}
In this section, we prove the interactive completeness theorem. First,
we give the formal definition of ``to come from",
and then we move to the statement and the proof of the theorem.

 \begin{Definition}[To come from] \label{interpretation} Let $\bS$ be a sequent, and let $\pi$ be a derivation of $\bS$.
Let $\cT$ be a test. We say that $\cT$
\textbf{comes from the derivation $\pi$ of $\bS$}
if \squishlist
\item[] \centering{$\cT(p) \ = \  \pi_\sR(p)$\enspace,   \ for every $p \in \dom(\pi)$\enspace.

\vspace{-0.57cm}
\hfill$\triangle$ \par}

\squishend

 \end{Definition}

In other words, $\cT$ comes from the derivation $\pi$ of $\bS$ if $\cT$ ``contains" the \emph{skeleton} of $\pi$, in the  sense of the beginning of this section.

 \begin{theorem}[Interactive completeness]
 Let  \/$\bS$\@ be a sequent, and let  \/$\cT$\@ be a test.
 Then, the following claims are equivalent.
 \squishlist
 \item[$\phantom{ab}$ \emph{(1)}] The test \/$\cT$\@ comes from a derivation  of  \/$\bS$\@.
 \item[$\phantom{ab}$ \emph{(2)}] The interaction between \/$\cT$\@ and  \/$\neg \bS$\@
 does not produce errors.
 \squishend
  \end{theorem}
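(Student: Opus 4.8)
The plan is to prove the two implications separately, exploiting the fact (Remark \ref{remsubform}(c)) that a derivation of a fixed sequent $\bS$ is completely determined by its skeleton, so that the derivation $\pi$ from which $\cT$ comes is \emph{unique} if it exists. For the direction $(1) \Rightarrow (2)$, I would suppose $\cT$ comes from a derivation $\pi$ of $\bS$, so $\cT(p) = \pi_\sR(p)$ for all $p \in \dom(\pi)$. The natural strategy is to establish a precise correspondence between $\dom(\pi)$ and $\dom(\CT((\cT,\neg\bS)))$ by induction on the length of positions, showing simultaneously that along $\dom(\pi)$ the interaction tree never produces $\Uparrow$. The key invariant to maintain is: for each $p \in \dom(\pi)$ with $\pi_\sL(p) = \vee[\bF_0,\ldots,\bF_{n-1}]$, we have $p \in \dom(\CT((\cT,\neg\bS)))$ and the label is $\big(\cT_p,\,\wedge[\neg\bF_0,\ldots,\neg\bF_{n-1}]\big)$ (using $\neg\bS = \wedge[\neg\bF_0,\ldots]$ at the root and the matching clauses at successors). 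The point is that conditions $(\axi)$, $(\vee)$, $(\wedge)$ of Definition \ref{deri} are the exact De~Morgan duals of clauses (C$_2$)($\axi$), (C$_2$)($\vee$), (C$_2$)($\wedge$): when $\cT(\roo)$ fires an $(\axi)$ rule matching $\pi$, the hypotheses $\bF_k = \bbv$, $\bF_\ell = \neg\bbv$ become $\neg\bF_k = \neg\bbv$, $\neg\bF_\ell = \bbv$, which is exactly the winning axiom case for $O$, so no error is produced; similarly the $\vee$ and $\wedge$ rule applications in $\pi$ make the interaction step fall into clause (C$_2$)($\vee$) or (C$_2$)($\wedge$) rather than the error clause (C$_2$)($\Uparrow_2$).

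For the converse $(2) \Rightarrow (1)$, I would argue contrapositively but constructively: assuming no error is ever produced, I would build a derivation $\pi$ of $\bS$ out of $\cT$ and the interaction tree, then verify $\cT$ comes from it. Concretely, I would define a candidate labeled tree $\pi$ by setting $\dom(\pi) \eqdef \dom(\CT((\cT,\neg\bS)))$, $\pi_\sR(p) \eqdef \cT(p)$, and recovering $\pi_\sL(p)$ as the De~Morgan negation of the environment component of $\CT((\cT,\neg\bS))(p)$ (i.e.\ if the label is $(\cU,\wedge[\bG_0,\ldots,\bG_{m-1}])$, set $\pi_\sL(p) \eqdef \vee[\neg\bG_0,\ldots,\neg\bG_{m-1}]$). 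The essential work is to check that each $p \in \dom(\pi)$ satisfies one of the clauses $(\axi)$, $(\vee)$, $(\wedge)$ of Definition \ref{deri}. Here the absence of errors is exactly what is needed: since no node is labeled $\Uparrow$, clause (C$_2$)($\Uparrow_2$) was never invoked, which forces the test rule $\cT(p)$ at every node to be one of the three rule shapes whose guard succeeded — and each success of a guard in (C$_2$) is precisely the content of the corresponding condition (ii)/(iii)/(iv) in Definition \ref{deri}, after dualizing. One also needs to verify that $\pi$ really is a tree labeled by $\seq \times \rules$ (conditions (T$_1$), (T$_2$) and that each $\pi_\sL(p)$ is genuinely a sequent), which follows from the shape of the environment components that the interaction generates.

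The main obstacle I anticipate is \emph{bookkeeping the environment accumulation} correctly. In clauses (C$_2$)($\vee$) and (C$_2$)($\wedge$) the environment grows by appending $\bH_{\vv i_0\ww}$ (resp.\ $\bH_{\vv i\ww}$), and these $\bH$'s are the immediate subformulas of $\bG_k$, i.e.\ of the negated active formula; I must check that after negation this matches exactly the sequent growth $\vee[\ldots,\bG_{\vv i_0\ww}]$ prescribed in conditions $(\vee)$(iv) and $(\wedge)$(iv) of Definition \ref{deri}, using $\neg\arnnn{\vv i\ww}{I} = \arppp{\vv i\ww}{I}$ and Proposition-style De~Morgan identities for subformulas. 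A subtle point is that the environment in the interaction records negations $\neg\bF_j$ of the sequent formulas and appends $\bH_{\vv i\ww}$ (a subformula of the \emph{negated} active formula), so I must track carefully that negating the whole environment reproduces the original formulas together with the correct new immediate subformula $\bG_{\vv i\ww}$ rather than its negation. Once this single dualization bookkeeping is pinned down, both directions reduce to a straightforward induction on position length, with the error clause (C$_2$)($\Uparrow_2$) serving as the exact witness that separates ``is a legal derivation step'' from ``is not''.
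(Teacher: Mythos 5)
Your proposal follows essentially the same route as the paper's own proof in both directions: for (1)$\Rightarrow$(2) an induction on the length of positions establishing the label correspondence $\CT(c)(p) = \big(\cT_p\,,\,\neg\pi_\sL(p)\big)$ via the De~Morgan duality between the clauses of Definition \ref{deri} and those of Definition \ref{comptree}, and for (2)$\Rightarrow$(1) reading a derivation directly off the interaction tree by negating the environment components and keeping the rules of $\cT$. One caution: your stated invariant quantifies only over $p \in \dom(\pi)$, but to conclude that \emph{no} position of $\dom(\CT(c))$ is labeled $\Uparrow$ you need the correspondence to be two-sided (the paper packages this as the dichotomy that every $p \in \NN^\star$ satisfies either (P$_1$), lying in both domains with matching labels, or (P$_2$), lying in neither); your case analysis does deliver this once you observe that the ``if and only if'' conditions in (C$_2$) pin down the children of each matched position exactly, so the interaction tree cannot branch outside $\dom(\pi)$.
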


 \begin{proof} \ (1) implies (2) \ \!\! :   \ \!\!
 Let $\pi$  be a derivation of $\bS$, and assume that $\cT$ comes from $\pi$. Let $c$ be $\big( \cT  \, , \, \neg\bS    \big)$.  To show that the interaction between \/$\cT$\@ and  \/$\neg \bS$\@
 does not produce errors,
 we   now prove that
 every
$p \in \NN^\star$ satisfies
one of the following (mutually exclusive) conditions:
\squishlist
\item[$\phantom{ab}$ (P$_1$)] : \ $p \in  \dom(\pi)$,
$p \in  \dom(\CT(c))$
and $\CT(c)(p) =  \big( \cT_p  \, , \, \neg \pi_\sL(p) \big)$.
\item[$\phantom{ab}$ (P$_2$)] : \   $p \notin \dom(\pi)$ and
$p \notin  \dom(\CT(c))$.
\squishend
 We proceed  by induction on the length of $p$.
The position $\roo$ satisfies (P$_1$),
as $\roo \in \dom(\pi)$, $\roo \in \dom(\CT(c))$  and $\CT(c)(\roo) = c =  \big( \cT  \, , \, \neg \bS \big) = \big( \cT_{\roo}  \, , \, \neg \pi_\sL(\roo) \big)$.
Consider now an  arbitrary position $p \in \NN^\star$.
Assume that $p$ satisfies (P$_1$) or (P$_2$).
We  show that for each $j \in \NN$, the position
$p\star \vv j \ww$ satisfies (P$_1$) or (P$_2$).
If $p$ satisfies (P$_2$), then
$p \star \vv j \ww \notin \dom(\pi)$ and
$p \star\vv j \ww \notin  \dom(\CT(c))$, as $\pi$ and $\CT(c)$
 are labeled trees. Hence,
 $p \star \vv j \ww$ satisfies (P$_2$).
Otherwise,  $p$ satisfies   (P$_1$). Since
 $p \in \dom(\pi)$, we have  $\cT(p) = \pi_\sR(p)$, as $\cT$ comes from $\pi$ by assumption.
 Moreover,
since $\pi$ is a derivation, $p$  satisfies
one of
 conditions  ($\axi$), ($\vee$) and ($\wedge$) of Definition
\ref{deri}. Let
$\pi_\sL(p)$ be  $\vee[\bG_{0},\ldots, \bG_{n-1}]$,
so that $\neg \pi_\sL(p)= \wedge[\neg\bG_{0},\ldots, \neg \bG_{n-1}]$.
 We now consider   the following subcases.

(i) If  $\cT(p) = \Dis{\bbv , k , \ell}$,
then $p$ satisfies
 ($\axi$) of Definition
\ref{deri}.
 Hence, $k<n$, $\ell < n$,
 $ \bG_k = \bbv$,
 $\bG_\ell = \neg \bbv$ and
  $p$ is a leaf of $\pi$. Since  $\neg \bG_k = \neg \bbv$ and
 $\neg \bG_\ell = \bbv$,   $\CT(c)(p)$
is as in (C$_2$)($\axi$)
 of Definition \ref{comptree}.  Hence,
$p$ is  a  leaf of
  $\CT(c)$. In particular,
$p \star \vv j \ww$ satisfies (P$_2$).

(ii) If $\cT(p) = \Dis{\vee , k , i_0}$, then  $p$
satisfies   ($\vee$) of Definition
\ref{deri}. This means that $ k < n$,
 $\bG_k = \arppp{\vv i \ww}{I}$, $i_0 \in I$,
 only $p \star \vv i_0 \ww$ immediately extends $ p$ in $\dom(\pi)$,
 and $\pi_\sL(p \star \vv i_0 \ww) =
  \pi_\sL(p) \vee \bH_{\vv i_0 \ww}$.
Since   $\neg \bG_k = \wedge_{I} \neg\bH_{\vv i \ww}$ and $i_0 \in I$,     $\CT(c)(p)$
is as in (C$_2$)($\vee$)
 of Definition \ref{comptree}. Hence, only $p \star \vv i_0 \ww$ immediately extends $ p \in \dom(\CT(c))$
 and $\CT(c)(p \star \vv i_0 \ww) = \big( (\cT_p)_{\vv i_0 \ww} \, , \,
 \neg \pi_\sL(p) \wedge \neg \bH_{\vv i_0 \ww}
   \big) =
   \big( \cT_{p\star \vv i_0 \ww} \, , \,
 \neg\pi_\sL(p \star \vv i_0 \ww)
   \big)
   $. So, $p \star \vv  j\ww$ satisfies (P$_1$) if $ j = i_0$, and  $p \star \vv  j\ww$ satisfies (P$_2$) otherwise.

 (iii) Finally, if  $\cT(p) = \Dis{\wedge , k}$,
then $p$ satisfies
 ($\wedge$) of Definition
\ref{deri}. So,
 $ k < n$,
 $ \bG_k = \arnnn{\vv i \ww}{I}$,
$p \star \vv i \ww$ immediately extends $ p$ in $\dom(\pi)$ if and only if $i \in I$,
 and $\pi_\sL(p \star \vv i \ww) =
  \pi_\sL(p) \vee \bH_{\vv i \ww}$
  for each $i \in I$.
Since   $\neg \bG_k = \vee_{I} \neg\bH_{\vv i \ww}$,  $\CT(c)(p)$
is as in   (C$_2$)($\wedge$)
 of Definition \ref{comptree}.  Thus,
$p \star \vv i \ww$ immediately extends $ p$ in  $\dom(\CT(c))$ if and only if
$i \in I$
 and $\CT(c)(p \star \vv i \ww) = \big( (\cT_p)_{\vv i \ww} \, , \,
 \neg   \pi_\sL(p) \wedge \neg \bH_{\vv i \ww}
   \big) =
   \big( \cT_{p\star \vv i \ww} \, , \,
 \neg\pi_\sL(p \star \vv i \ww)
   \big)
   $, for every $i \in I$.
Hence, $p \star \vv  j\ww$ satisfies (P$_1$) if $ j \in I $, and  $p \star \vv  j\ww$ satisfies (P$_2$) otherwise.

This proves that  each $p \in \NN^\star$ satisfies
either  (P$_1$) or (P$_2$).
From  this fact, it follows that
$\CT(c)(p) =
\big( \cT_p  \, , \, \neg \pi_\sL(p) \big)$ for all $p \in\dom(\CT(c))$.
In particular, there  is no $p \in \dom(\CT(c))$
such that $\CT(c)(p) = \ \Uparrow$, \ie the interaction between \/$\cT$\@ and  \/$\neg \bS$\@
 does not produce errors.

\vspace{0.05cm}

  (2) implies (1)   :      Let $c$ be $\big( \cT  \, , \, \neg\bS    \big)$ and assume that (2) holds.
This means that each position $p \in \dom(\CT(c))$
 is labeled by a member of $\TES \times \envir$.  In  such a situation, it immediately  follows from the definition of interaction tree that     for each
 $p \in \dom(\CT(c))$ we have
$\CT(c)(p) = \big( \cT_p \, , \, \bE    \big)$, for some
 $\bE \in \envir$.
We  define a tree $T$ labeled $\seq \times \rules$ as follows: $\dom(T)  \eqdef   \dom(\CT(c))$ and  \squishlist
\item[] {\centering
 $T(p) \ \eqdef \     \big(\neg \bE \, , \, \cT(p)\big)$\enspace, \ \
 for  $ p  \in \dom(T)$  with   $\CT(c)(p) = \big(\cT_p \, , \, \bE\big)$\enspace.
 \par}

\squishend
We now show that $T$ is a derivation. To do this,
we need to check that for each $p \in \dom(T)$
one of    conditions  ($\axi$),  ($\vee$) and  ($\wedge$) of Definition
\ref{deri} holds.
Let  $\CT(c)(p) = \big( \cT_p \, , \, \bE \big)$ and let $\bE$ be $\wedge[\bF_{0},\ldots,\bF_{m-1}]$, so that
$\neg \bE = \vee[\neg\bF_{0},\ldots,\neg\bF_{m-1}]$. There are the following cases to consider.

(1) Suppose that $\cT(p) = \Dis{\bbv , k , \ell}$.
Since the interaction between \/$\cT$\@ and  \/$\neg \bS$\@
 does not produce errors, the configuration
$\CT(c)(p)$
is as  in (C$_2$)($\axi$)
 of Definition \ref{comptree}. This means that
 $k < m$, $\ell < m$,
 $ \bF_k = \neg \bbv$,
 $ \bF_\ell =  \bbv$ and
  $p$ is a leaf of $\CT(c)$.  Since $\neg\bF_k = \bbv$ and
 $\neg\bF_\ell = \neg \bbv$,  and since  $p$ is  a leaf of $T$, the position
 $p$ satisfies   ($\axi$) of Definition
\ref{deri}.

(2) Suppose that $\cT(p) = \Dis{\vee , k , i_0}$. Since the interaction between \/$\cT$\@ and  \/$\neg \bS$\@
 does not produce errors,  the situation is as in
 (C$_2$)($\vee$)
 of Definition \ref{comptree}. Hence,
 $k < m$,
 $ \bF_k = \wedge_{I} \bH_{\vv i \ww}$, $i_0 \in I$,
 only $p \star \vv i_0 \ww$ immediately extends $ p$ in $\dom(\CT(c))$,
 and   $\CT(c)(p \star \vv i_0 \ww) = \big( \cT_{p \star \vv i_0 \ww}   \, , \, \bE \wedge \bH_{\vv i_0 \ww} \big)$.
  Since  $\neg \bF_k = \vee_{I} \neg \bH_{\vv i \ww}$,
  $i_0 \in I$,
 only $p \star \vv i_0 \ww$ immediately extends $ p$ in $\dom(T)$
 and   $T_\sL(p \star \vv i_0 \ww) = \neg(\bE \wedge \bH_{\vv i_0 \ww}) = \neg \bE \vee \neg \bH_{\vv i_0 \ww}$, we conclude that
 ($\vee$) of Definition
\ref{deri} holds for the position $p$.

(3) Finally, suppose that $\cT(p) = \Dis{\wedge, k}$.
Since the interaction between \/$\cT$\@ and  \/$\neg \bS$\@
 does not produce errors, the configuration
$\CT(c)(p)$ is as in  (C$_2$)($\wedge$)
 of Definition \ref{comptree}. Thus,
 $k < m$,
 $\bF_k = \vee_{I} \bH_{\vv i \ww}$,
$p \star \vv i \ww$ immediately extends $ p$ in $\dom(\CT(c))$ if and only $i \in I$,
 and   $\CT(c)(p \star \vv i \ww) = \big( \cT_{p \star \vv i \ww}   \, , \, \bE \wedge \bH_{\vv i \ww} \big)$
 for every $i \in I$.
  Since $\neg \bF_k = \wedge_{I} \neg \bH_{\vv i \ww}$,
$p \star \vv i \ww$ immediately extends $ p$ in $\dom(T)$ if only if $ i \in I$,
   $T_\sL(p \star \vv i \ww) = \neg(\bE \wedge \bH_{\vv i \ww}) = \neg \bE \vee \neg \bH_{\vv i \ww}$ for every $i \in I$, we conclude that
 the position
 $p$ satisfies   ($\wedge$) of Definition
\ref{deri}.

Hence, $T$ is a derivation of $\bS$.
By construction, $T_\sR(p) = \cT(p)$ for all
$p \in \dom(T)$. Therefore, the test \/$\cT$\@ comes from a derivation of \/$\bS$\@, namely \/$T$\@. \end{proof}

\nocite{*}
\bibliographystyle{eptcs}
\bibliography{basaldellabibliography}

\end{document}